\documentclass[conference,letterpaper]{IEEEtran}


\addtolength{\topmargin}{9mm}

\usepackage[utf8]{inputenc} 
\usepackage[T1]{fontenc}
\usepackage{url}              
\usepackage{cite}             

\usepackage[cmex10]{amsmath}  
\interdisplaylinepenalty=1000 
\usepackage{mleftright}       
\mleftright                   

\usepackage{graphicx}         
\usepackage{booktabs}
\IEEEoverridecommandlockouts
\usepackage{cite}
\usepackage{amsmath,amssymb,amsfonts}
\usepackage{graphicx}
\usepackage{enumerate}
\usepackage{enumitem}
\usepackage{textcomp}
\usepackage{xcolor}
\usepackage{lipsum}
\usepackage{mathtools}
\usepackage{amsthm}
\usepackage{amsmath}
\usepackage{amssymb}
\usepackage{delarray}
\usepackage{bm}
\usepackage{graphicx}
\usepackage{color}
\usepackage{enumitem}
\usepackage{footmisc}
\usepackage{subfig}
\usepackage{graphicx}
\usepackage{cancel}
\usepackage[export]{adjustbox}
\usepackage{comment}
\usepackage{graphicx}
\usepackage{subfig}
\usepackage{caption}
\usepackage{algorithm2e}
\usepackage{pgfplots}
\usepackage{todonotes}
\usepackage{diagbox}
\usepackage{multirow} 

\pgfplotsset{compat=newest}
\pgfplotsset{plot coordinates/math parser=false}
\newlength\figureheight
\newlength\figurewidth

\newtheorem{theorem}{Theorem}
\newtheorem{corollary}{Corollary}

\newtheorem{prop}{Proposition}

\newtheorem{rem}{Remark}

\newcommand{\HHx}[4]{E\left(\N,\ell,k,\qq \right)}

\newcommand{\N}{n}

\newcommand{\prb}{\upsilon}

\makeatletter
\newcommand*{\rom}[1]{\expandafter\@slowromancap\romannumeral #1@}
\makeatother

\newcommand{\qq}[0]{\alpha}

\newcommand{\mat}[1]{{\mathbf{B}}}

\newcommand{\al}{\alpha}

\newcommand{\cM}{\mathsf{M}^{(c)}}
\newcommand{\cx}{\bm{x}}

\newcommand{\cy}{\bm{y}^{(c)}}

\newcommand{\sM}{\mathsf{M}^{(s)}}

\newcommand{\sy}{\bm{y}^{(s)}}

\newcommand{\mem}[2]{m_{#1 \!M + #2}}

\newcommand{\Repr}[2]{\mathcal{R}_{#1,#2}}
\newcommand{\selec}[2]{\mathcal{S}_{b,i}}

\newcommand{\supp}{\mathsf{supp}}
\newcommand{\bad}{\mathcal{D}}
\newcommand{\badm}[1]{\mathcal{B}_{#1}}
\newcommand{\spar}{\rho_{T}}

\newcommand{\nospar}{\infty}

\newcommand{\Fam}[1]{\mathcal{M}_{#1}}

\newcommand{\hh}{h}

\usepackage{centernot}

\renewcommand{\comment}[1]{ }

\newcommand{\mf}{\mu_p}
\newcommand{\mm}{\mu_m}
\newcommand{\lc}{\left\{}
\newcommand{\rc}{\right\}}

\newcommand{\xtra}[2]{\log_{#1}\!\!\big({#2}\big)}

\newcommand{\TI}{T_{\mathsf{I}}}
\newcommand{\TII}{T_{\mathsf{II}}}
\newcommand{\TIIL}{T_{\mathsf{II,L}}}
\newcommand{\TIIsL}{T_{\mathsf{II,sL}}}
\newcommand{\TSNC}{T_{\mathsf{nC,S}} }
\newcommand{\TNSCI}{T_{\mathsf{C,nS,I}} }
\newcommand{\TNSCII}{T_{\mathsf{C,nS,II}} }
\newcommand{\TSCI}{T_{\mathsf{C,S,I}} }

\begin{document}

\title{Sparsity-Constrained Community-Based \\ Group Testing}
\author{
\IEEEauthorblockN{Sarthak Jain, Martina Cardone,  Soheil Mohajer}
University of Minnesota, Minneapolis, \!MN 55455, \!USA,
\!Email: \{jain0122, mcardone, soheil\}@umn.edu
\\
\vspace{-0.9em}
\thanks{This research was supported in part by the U.S. National Science Foundation under Grant CCF-1907785.} 
}
\maketitle

\begin{abstract}
In this work, 
we consider the sparsity-constrained {community-based} group testing problem, where the population follows a community structure. In particular, the community consists of $F$ families, each with $M$ members. {A number} $k_f$ out of the $F$ families are infected, and a family is said to be infected if $k_m$ out of its $M$ members are infected. 
Furthermore, the sparsity constraint allows at most $\spar$ individuals to be grouped in each test. 
For this sparsity-constrained community model, we propose a probabilistic group testing algorithm that can identify the infected population with a vanishing probability of error and we provide an upper-bound on the number of tests.  
When $k_m = \Theta(M)$ and $M \gg \log(FM)$, our bound outperforms the existing sparsity-constrained group testing results trivially applied to the community model. If the sparsity constraint is relaxed, our achievable bound reduces to existing bounds for community-based group testing. 
Moreover, our scheme can also be applied to the classical dilution model, where it outperforms existing noise-level-independent schemes in the literature.
\end{abstract}
\section{Introduction}
Group testing (GT), 
first introduced in 1943 in~\cite{dorfman1943}, is an umbrella term for the methods used to identify $k$ defective items among $n$ items, with as few tests as possible.
{The main idea consists of} performing tests on pools/groups of items rather than testing each item individually. 
GT has many applications, ranging from medicine~\cite{verdun2021} to engineering~\cite{berger1984}, and  
is broadly classified into \emph{combinatorial} GT and \emph{probabilistic} GT. 
In combinatorial GT, the goal is to identify the defective items with a zero error probability~\cite{Du1993}. In probabilistic {GT, instead, it suffices that the error goes to zero asymptotically, as $n \to \infty$;} moreover, for finite $n$, the error can be made arbitrarily small by appropriately scaling the number of tests~\cite{Chan2011,mazumdar2016,barg2017,inan2018, arpino2021, Atia09, cheng2023}. GT can be \emph{noiseless} or \emph{noisy}~\cite{Chan2011, Atia09, cheng2023}. In noiseless GT, each test is error free (no false positives or misdetection); 
whereas, in noisy GT, the {test results} may be erroneous~\cite{scarlett2018,Atia09, CheraghchiIT2011, cheng2023, arpino2021, Chan2011}.

Most GT problems assume a \emph{combinatorial prior} on the set of defective items. This means that the $k$ defective items are equally likely to be any of the $\binom{n}{k}$ items. In this case, the counting bound~\cite{Chan2011} states that the number of tests required to identify the $k$ {defective} items is at least $\Theta\left(k \log\left(\frac{n}{k}\right)\right)$. When $k$ follows a sparse regime, that is, $k = \Theta\left(n^{\delta_k}\right)$ for some constant $\delta_k \in [0,1)$, this is significantly less than individual testing, which requires $\Theta(n)$ tests. For sparse $k$, the counting bound {indeed} becomes $\Theta(k \log(n))$. Several GT schemes achieve the counting bound for noiseless GT with a combinatorial prior and hence, are order optimal~\cite{aldridge2019}.
Recent works have considered variants of GT with additional information on the set of defective items~\cite{cao2023,pmlr2021,karimi2022,pavlos21}. 
In~\cite{pmlr2021}, the authors introduced one such model, referred to as the \emph{community} model, and analyzed the symmetric and general variants of~it. The symmetric community model considers a community of $F$ families, each with $M$ members. {A number} $k_f$ out of the $F$ families are \emph{infected}. If a family is healthy, none of its members are infected; if a family is infected, $k_{m}$ out of its $M$ members are infected. Ignoring the community structure, this model reduces to identifying $k_f k_m$ infected members out of $n=FM$ members, which requires $\Theta\left(k_f k_m \log(n)\right)$ tests. However, it was observed that
leveraging the community structure 
can greatly reduce the number of tests~\cite{pmlr2021,pavlos21}. 

In this work, we consider the symmetric community model of~\cite{pmlr2021}, but we additionally impose a {\em sparsity constraint}. 
This constraint allows at most $\spar$ individuals to be pooled in each test. 
This 
model has practical significance. Many infections, such as {${\text{COVID-19}}$}, are indeed governed by community spread, and a community model is suitable to capture such scenarios. This model {can be helpful also} in bio-security applications, e.g., to test consignments of seeds/flowers~\cite{Clark2023-wr}.  
Moreover, in many real world applications, there is often a constraint on the number of items that can be pooled in each test. {This constraint may depend on several factors, e.g., test equipment capacity and test efficacy.} 
For example, in swab pooling methods for {COVID-19} testing, it is recommended to pool up to $16$ swabs in each test~\cite{Christoff2021-qw}; and some HIV testing schemes allow $80$ individual samples per test~\cite{hiv, gebhard2022}.

For the sparsity-constrained community model, we propose a probabilistic GT scheme that identifies the infected population with a probability greater than $1-n^{-\lambda}$, for any constant {$\lambda > 0$,} and provide an achievable bound on the number of tests required.  When $k_m = \Theta(M)$ and $M \!\gg\! \log(FM)$, our scheme requires much fewer tests than applying existing sparsity-constrained GT schemes~\cite{gandikota2019, gebhard2022} to the community model. {Moreover, without the sparsity constraint, our 
bound reduces to existing bounds in community based GT~\cite{gandikota2019}.
Our scheme can also be applied to the classical dilution model~\cite{arpino2021, CheraghchiIT2011,mazumdar2014group, Atia2009, cheng2023}, where there is no community structure or sparsity constraint. 
For this model, our scheme requires $\Theta\left(\frac{k \log(n)}{\al}\right)$ tests, where $\al$ is the dilution noise parameter. 
{This bound provides a factor of $\alpha$ improvement with respect to the best achievable bound~\cite{jain2022} of $\Theta\left(\frac{k \log(n)}{\al^2}\right)$ in the literature for the dilution model using a noise-level-independent (NLI) scheme~\cite{arpino2021} (i.e., when the test design is independent of $\al$).}


\begin{table}
\caption{Quantities of interest used throughout the paper.}
\label{table:Notation}
\begin{center}
\begin{tabular}{ |c||c| }
\hline
{\bf{Quantity}} & {\bf{Definition}} \\
\hline
\hline
$F$ & Number of families \\
$M$ & Number of members in each family \\
$n$ & Total number of members, that is, $n=FM$ \\
$\bad$ & Set of infected families \\
$k_f$ & Number of infected families \\
$\badm{f}$ & Set of infected members in family $f \in [F]$ \\
$k_m$ & Number of infected members in an infected family\\
$\spar$ & Maximum number of members allowed in each test \\ 
$T$ & Number of tests performed by a GT scheme \\
\hline
\end{tabular}
\end{center}
\vspace{-2em}
\end{table}

{\noindent {\bf{Notation}.} For any $k \in \mathbb{N}$, we define ${[k] := \{1,2, \ldots,k\}}$. For a set $\mathcal{X}$, $|\mathcal{X}|$ denotes its cardinality. For a matrix $\mathsf{M}$, we use $\mathsf{M}_{i,:}$ and $\mathsf{M}_{:,j}$ to represent its $i$th row and $j$th column, respectively. 
An empty set is denoted by $\varnothing$. 
For a vector $\bm{x}$, {we let} ${\supp(\bm{x}) := \{i: \bm{x}_i \neq 0\}}$. Finally, $\wedge$ and $\vee$ represent the Boolean \texttt{AND} and \texttt{OR} operations, respectively. 

\section{System Model} \label{sec:system-model}

We consider a collection of $F$ families, denoted by~$[F]$, where each family consists of~$M$ members.\footnote{{This is the
symmetric} model in~\cite{pmlr2021}, where $M_f = M$, for all $f \in [F]$.} The total number of members is, therefore, ${\N := FM}$. The members of family $f \in [F]$ are referred to as ${\Fam{f}:=\{\mem{(f-1)}{i} : i \in [M]\}}$. An unknown subset $\bad \subseteq [F]$, consisting of $k_f$ families (that is, $|\mathcal{D}|=k_f$), is infected. We assume a combinatorial prior on this subset of infected families, that is, the defective set is chosen uniformly at random among all {the} $\binom{F}{k_f}$ sets of this size $k_f$. If a family $f$ is not infected, none of its members are infected; whereas, if $f$ is infected, an unknown subset ${\badm{f} \subseteq \Fam{f} }$ of the $M$ members of that family are infected. Again, for the symmetric model considered here, we assume that $|\badm{f}| = k_m$ for all $f \in \bad$. Moreover, we assume {that} $\badm{f}$ is chosen uniformly at random among all {the} 
$\binom{M}{k_m}$ subsets of size $k_m$. Table~\ref{table:Notation} summarizes the quantities used for problem formulation. 

{
{Our goal 
is to design a GT scheme, which uses as few tests as possible, to identify the infected population with a vanishing error probability, i.e., a probability of error that goes to zero at a rate of~$n^{-\lambda}$ for some constant $\lambda > 0$.
Due to practical considerations~\cite{Christoff2021-qw, Fernandez2020}, we impose a \emph{sparsity constraint}, which restricts the number of members that can participate in each test. In particular, in any given test, at most $\spar$ out of the $n$ members can be pooled together. }


\section{Preliminaries and Related Results}
In this section, we first introduce the contact matrix, which is the mathematical model for GT. Then, we review some existing results. In particular, we establish two benchmarks for the performance of the algorithm, based on existing methods.



\subsection{Combinatorial GT}
\label{sec:CGT}
Consider a general (with no sparsity constraint or community structure) GT problem with $N$ items among which $k$ are defective. 
Let $T$ be the number of tests performed by a GT algorithm. These tests can be described using the {\em contact} matrix $\cM \in \{0,1\}^{T \times N}$, where each row corresponds to a test and each column corresponds to an item.  If $\cM_{t,i} = 1$, then item $i$ is selected in test~$t$. 
Let $\cx \in \{0,1\}^N$ be the indicator vector for the defective items, that is, $\cx_i = 1$ if and only if item ${i}$ is defective. 
Then, the result of the tests can be represented by a vector  $\cy \in \{0,1\}^{T}$ as 
\begin{equation} \label{eq:ideal_y}
    \cy = \cM \odot \cx,
\end{equation}
where $\odot$ denotes the matrix-vector \emph{logical} multiplication, in which the arithmetic multiplication and addition are replaced by logical \texttt{AND} and \texttt{OR}, respectively. More precisely, we have ${\cy_t = \bigvee_{i=1}^N (\cM_{t,i} \land \cx_i)}$. 
It is known that using a proper selection of 
{$\cM$}
and an appropriate decoder, with probability at least $1-N^{-\lambda}$ {for any $\lambda>0$,} the set of defective items can be identified using $T=\Theta(k\log(N/k))$ tests~\cite{Du1993}. 

\subsection{Sparsity-Constrained Combinatorial GT}
\label{sec:spar-GT}
The result of~\cite{Du1993} holds when the number of items to be tested together in each pool is arbitrary. In general, a larger number of tests is required if a sparsity constraint is imposed~\cite{gandikota2019,gebhard2022,price23}. Let $\rho_U$ be the maximum number of items allowed to participate in each test. From these results and classical GT~\cite{Chan2011}, it can be argued that, to achieve a probability of error of $\widetilde{N}^{-\lambda}$ for some $\widetilde{N} \geq N$ and any constant {$\lambda > 0$,} the number of tests\footnote{The additional term of $\Theta\left({\log_N(\widetilde{N})}\right)$ {in~\eqref{eq:tildeT},} compared to~\cite{gandikota2019, Chan2011}, guarantees {that the error probability vanishes} at the desired rate of $\widetilde{N}^{-\lambda}$ instead of $N^{-\lambda}$.}  required is at least equal to~\cite{Chan2011, gandikota2019},
\begin{align}\label{eq:tildeT}
    \widehat{T}\left(\!N,k,\rho_U, \widetilde{N}\!\right) \!=\! \Theta\left(\max\lc \frac{N}{\rho_U}, k \log(N)\! \rc  \xtra{N}{\widetilde{N}} \!\right).
\end{align} 
 {In our system model (Section~\ref{sec:system-model}),} we can ignore the community structure and directly identify all the $k_f k_m$ infected members out of the $n$ members. With a sparsity constraint $\spar$, the number of required tests can be found from~\eqref{eq:tildeT} as
\begin{align} \label{eq:T-no-comm}
\TSNC = \widehat{T}\left(n, k_f k_m, \spar, n\right)
=\Theta\left(\max\lc\frac{n}{\spar}, k_f k_m \log(n)\hspace{-2pt}\rc\right).
\end{align}

\subsection{{Community-Based} GT Without Sparsity Constraints}
In the system model (Section~\ref{sec:system-model}), if there is no sparsity constraint (that is, {$\spar = \nospar$}), 
a two-stage algorithm, introduced in~\cite{pmlr2021}, can be utilized, where: (i) in the first stage, the $k_f$ infected families are identified; and (ii) in the second stage, depending on the regime of $(k_m,M)$, either individual testing
or GT is performed only on the infected families (identified in the first stage) to identify their $k_m$ infected members. 
{For both 
stages, this} algorithm leverages existing non-adaptive probabilistic GT schemes~\cite{aldridge2019}. The numbers of tests in the first stage and second stage, respectively, are given by
\begin{align}\label{eq:T-nS-C}
\begin{split}
    \TNSCI &= \Theta\left(k_f \log(n)\right),\\
    \TNSCII &= \begin{cases}
        k_f \Theta(M) &\text{ if } k_m = \Theta(M),\\
         k_f \Theta(k_m \log(n)) &\text{ if } k_m = o(M).
    \end{cases} 
\end{split}
\end{align}

\subsection{Incorporating {Sparsity in the Two-Stage Algorithm}}
\label{sec:nS-C}
In the first stage of the algorithm of~\cite{pmlr2021}, initially a  contact matrix is designed to identify the $k_f$ infected families. However, since tests should be applied {on the} individual members (rather than the families), once a family is selected to participate in a test, all of its members will be pooled to be tested. Therefore, since each family consists of $M$ members, in order to satisfy the sparsity constraint of $\spar$ (on the number of {members allowed in each test}), we can pool together at most $\frac{\spar}{M}$ families to be tested. In other words, the initial test matrix should be designed with a sparsity constraint of $\frac{\spar}{M}$. Hence, using~\eqref{eq:tildeT}, the number of tests required in the first stage of the algorithm 
is given by
\begin{align}\label{eq:T-noiseless}
\TSCI &= \widehat{T}\left(F, k_f, \frac{\spar}{M}, n\right)
\nonumber
\\&= \Theta\left(\max\lc\frac{F M}{\spar}, k_f \log(F)\rc \xtra{F}{n} \right).
\end{align}

\section{The Proposed GT Scheme}
\label{sec:OurGTScheme}
In this section, we propose a new sparse GT algorithm to identify the infected members in the community structured problem. 
Inspired by~\cite{pmlr2021} (where there is no sparsity constraint), we adopt a
two-stage GT procedure. 
In the first stage (see Section~\ref{subsec:FirstStage}), the goal is to identify the $k_f$ infected families, whereas in the second stage (see Section~\ref{subsec:SecondStage}), we perform GT only on the infected families (identified in the first stage) to identify their $k_m$ infected members. We denote by $\TI$ and $\TII$  the number of tests required in the first and second stages, respectively. Then, the total number of tests required by the proposed algorithm is given by $T = \TI + \TII$. 


\begin{table}
\caption{Quantities of interest used in the GT scheme.}
\label{table:Notation_scheme}
\begin{center}
\begin{tabular}{ |c||c| }
\hline
{\bf{Quantity}} & {\bf{Definition}} \\
\hline
\hline
$\TI$ & Number of tests in the first stage \\
$\TII$ & Number of tests in the second stage \\
$\rho$ & Number of families selected in each test \\
$r$ & Number of members sampled from each selected family \\
$\al$ & Probability that an infected family is active \\
$\bad_t$ & Set of active infected families during test $t$ \\
$d$ & Threshold for the $d$-threshold decoder \\
$\Repr{f}{t}$ & Members of a selected family $f$ that participate in test $t$ \\ 
\hline
\end{tabular}
\end{center}
\end{table}


\subsection{First Stage: Identifying Infected Families}
\label{subsec:FirstStage}
We use a contact matrix $\cM\in\{0,1\}^{\TI\times F}$, initially designed for $F$ families for the first stage of the algorithm (similar to Section~\ref{sec:CGT} with $(N,k,T)=(F,k_f,\TI)$). For simplicity, we assume that $k_f \geq 2$ and $F \geq 2 k_f$. Table~\ref{table:Notation_scheme} summarizes the parameters used in the proposed scheme.

\noindent {\bf{Probabilistic design of the contact matrix:}} We first choose a $\TI\times F$ contact matrix $\cM$ with a \emph{family-sparsity} parameter $\rho \in [F]$ (which will be determined later). To this end, each row of $\cM$ is uniformly, randomly, and independently from other rows, selected from the $\binom{F}{\rho}$ possible rows that have Hamming weight equal to $\rho$. 
\hfill $\square$

\noindent {\bf{Family representative sets:}} Unlike the scheme in Section~\ref{sec:nS-C}}, where all the members of a selected family participate in a test, we choose a set of \emph{representative} members for each selected family to participate in tests. In particular, for each test $t$, a subset ${\Repr{f}{t}\subseteq \Fam{f} }$ {of members participate} in the test. More formally, the set of individuals that are pooled together in test $t$ is given by $\bigcup_{f \in \supp\left(\cM_{t,:}\right)} \Repr{f}{t}$. To this end, for each $(t,f)$, we select $\Repr{f}{t}$ uniformly at random from all the $\binom{M}{r}$ possible subsets of $\Fam{f}$ of size $r:=|\Repr{f}{t}|=\left\lfloor \frac{\spar}{\rho} \right\rfloor$. With the above designs of $\cM$ and $\Repr{f}{t}$, the number of members {that} participate in each test satisfies
\begin{align}
    \sum_{f=1}^F \cM_{t,f} \; |\Repr{f}{t}| &=\!\!\!\!\!\!\!\!\!\! \sum_{f \in \supp\left(\cM_{t,f}\right)} \! \left\lfloor \frac{\spar}{\rho} \right\rfloor = \rho \left\lfloor \frac{\spar}{\rho} \right\rfloor \leq \spar,
\end{align}
and hence, the sparsity constraint is satisfied.
\hfill $\square$

\noindent {\bf{The sampling matrix:}}
 With the representative sets (instead of the entire family) participating in each test, the identity in~\eqref{eq:ideal_y}} does not hold in general. To see this, consider a case where $\cM_{t,f}=1$, and $\Repr{f}{t} \cap \badm{f} = \varnothing$ for an infected family $f \in \bad$. Then, even {if} $f$ is infected and selected to participate in the test, it will not cause the test $t$ to be positive, since no infected member of the family is in its representative set. In other words, such an infected family \emph{pretends} to be healthy in the test. 
{To capture} this uncertainty, we define a \emph{sampling} matrix ${\sM  \in \{0,1\}^{{\TI} \times F}}$ obtained from 
$\cM$. We call an infected family $f \in \bad$ \emph{active} in test $t$, if and only if, ${\Repr{f}{t} \cap \badm{f} \neq \varnothing}$. 
 We denote the set of active infected families of test $t$ by ${\bad_t \subseteq \bad}$. Then, the sampling matrix $\sM$ is given by
  \begin{equation} \label{eq:sampling1}
\sM_{t,f} = \begin{cases}
   \cM_{t,f}  &  \text{if } f \in \left([F] \setminus \bad\right) \cup \bad_t, \\
   0 & \text{if } f \in \bad \setminus \bad_t,
    \end{cases}    
\end{equation}
%
and the actual results of the  tests (performed on the representatives of the families) are given by 
\begin{align} \label{eq:actual_y}
    \sy = \sM \odot \cx.
\end{align}
To understand the sampling matrix in~\eqref{eq:sampling1}, let us consider an infected family ${f \in \bad}$ that is selected in test $t$ (i.e., ${\cM_{t,f}=1}$). Now, if ${\Repr{f}{t} \cap \badm{f} = \varnothing}$, although $f$ is infected, none of its infected members participate in test $t$. In other words, family $f$ hides its true identity in test $t$. Since $\cM_{t,f}=1$ and $\cx_f=1$, we have {$\cy_{t}=1$}. However, the actual test result $\sy_t$ should not be influenced by $f$. This can be ensured by setting $\sM_{t,f}=0$.  

Let $\al$ be the probability that an infected family is active, that is,
\begin{align} \label{eq:define-al}
\al = \mathbb{P}\left[f \in \bad_t | f \in \bad \right] = 1-\frac{\binom{M-k_m}{r}}{\binom{M}{r}}.
\end{align}
In other words, $\cM_{t,f}=1$ is replaced by $\sM_{t,f}=0$ with probability $1-\alpha$. Moreover, if $\alpha=1$, then $\cM=\sM$. Note that the behavior of $\sM$ and $\cM$ is similar to that of the dilution model, that we recently studied in~\cite{jain2022}, and has also been {investigated} in~\cite{CheraghchiIT2011, atia2012boolean, cheng2023}. 
\hfill 
{$\square$}

}

 Given this construction of $\cM$ and $\Repr{f}{t}$, and the probabilistic nature of $\sM$ and $\sy$, the families are classified as infected or healthy using the following $d$-threshold decoder.

\noindent \textbf{The $d$-threshold decoder: } Let ${\sy_t \in \{0,1\}}$ be the result of test $t\in [\TI]$, given by ${\sy_t = \sM_{t,:} \odot \cx}$.
We define the score $S_{f,t}$ of  family $f$ in test $t$ as 
\begin{align}\label{eq:scoring}
    S_{f,t} = \begin{cases}
        1 &\text{if } \cM_{t,f} =1 \text{ and }  \sy_t=1, \\
        0 & \text{otherwise. }
    \end{cases}
\end{align}
Then, for a given $d>0$, family $f$ is marked as infected 
if and only if
$S_f = \sum\nolimits_{t=1}^{\TI} S_{f,t} \geq d$.
\hfill $\square$

{The following theorem provides the number of tests required in the first stage of the algorithm to ensure that the construction above can decode $\cx$ with an overwhelming probability.}
\begin{theorem} \label{thm1}
There exists a choice of the parameters $\left(\rho,  d \right)$ such that the $d$-threshold decoder requires at most
\begin{subequations}
\begin{align} \label{eq:T(rho)}
\TI =
        {\min_{\rho \in [\widehat{\rho}]}\frac{\zeta (1
        +\lambda) F\log(n)}{\rho \;\! \al} }
        & \leq {\frac{\zeta(1+\lambda) F\log(n)}{f\left(\widehat{\rho}\right)}}
\end{align}
tests
to identify the $k_f$ infected families
with  
error probability ${\mathsf{P}_e \leq n^{-\lambda}}$, for any $\lambda > 0$, where {$\al$ is given in~\eqref{eq:define-al} and}
\begin{align}
f(\rho) &=  \rho \left(1-\left(1-\frac{k_m}{M}\right)^{\frac{\spar}{2\rho}}\right),\ \zeta= 64 \;  {\rm{e}}^4, 
\\ \widehat{\rho} &= \min\left\{\spar, \left\lfloor\frac{F}{2 k_f}\right\rfloor\right\}.
%
\end{align} 

\label{eq:TMainTheorem}
\end{subequations} 
\end{theorem}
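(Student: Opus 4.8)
The plan is to bound the error probability $\Pe$ by a union bound over two families of bad events: the \emph{misdetection} event that some infected family $f\in\bad$ gets a score $S_f<d$, and the \emph{false-alarm} event that some healthy family $g\notin\bad$ gets $S_g\geq d$. For a fixed family-sparsity $\rho\in[\widehat{\rho}]$ I would choose $\TI$ so that a natural per-family signal has mean $\mu:=\TI\rho\al/F$ and set the threshold at $d=\tfrac34\mu$, halfway between the two regimes below; picking the $\rho$ minimising the resulting $\TI$ yields the displayed minimisation, and evaluating at $\rho=\widehat{\rho}$ together with a combinatorial bound on $\al$ gives the closed form.

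First, for misdetection, fix $f\in\bad$ and let $A_f$ count the tests in which $f$ is both selected ($\cM_{t,f}=1$) and active ($\Repr{f}{t}\cap\badm{f}\neq\varnothing$). Whenever both occur we have $\sM_{t,f}=1$ and $\cx_f=1$, so $\sy_t=1$ and hence $S_{f,t}=1$; therefore $S_f\geq A_f$. Since the rows are drawn independently, each family is selected with probability exactly $\rho/F$ and, independently, active with probability $\al$, so $A_f\sim\mathrm{Bin}(\TI,\rho\al/F)$ with mean $\mu$, and a lower-tail Chernoff bound controls $\P[S_f<d]\leq\P[A_f<\tfrac34\mu]\leq\exp(-c\mu)$.

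Second, for false alarms, fix $g\notin\bad$; then $S_{g,t}=1$ requires $g$ to be selected and the test to be positive, and since $\cx_g=0$ the test can be positive only through some infected family being selected and active. Conditioning on $g$ being selected and union-bounding over the $k_f$ infected families gives $\E[S_{g,t}]\leq\tfrac\rho F\,k_f\tfrac{\rho-1}{F-1}\,\al$, where the constraint $\rho\leq\lfloor F/(2k_f)\rfloor$ forces $k_f\tfrac{\rho-1}{F-1}\leq\tfrac12$, yielding the clean separation $\E[S_g]\leq\tfrac12\mu$ against the misdetection mean $\mu$. An upper-tail Chernoff bound then gives $\P[S_g\geq\tfrac34\mu]\leq\exp(-c'\mu)$. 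Union-bounding over at most $F\leq n$ families and requiring the total at most $n^{-\lambda}$ forces $\mu=\Theta((1+\lambda)\log n)$, i.e. $\TI=\Theta\!\big(\tfrac{(1+\lambda)F\log n}{\rho\al}\big)$; absorbing the Chernoff constants into $\zeta$ and minimising over $\rho\in[\widehat{\rho}]$ gives the first equality.

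Finally, to pass from $\min_\rho\TI(\rho)\leq\TI(\widehat{\rho})=\frac{\zeta(1+\lambda)F\log n}{\widehat{\rho}\,\al}$ to the stated bound, I must show $\widehat{\rho}\,\al\geq f(\widehat{\rho})$, i.e. $\al\geq 1-(1-k_m/M)^{\spar/(2\rho)}$ at $\rho=\widehat{\rho}$. Writing $\frac{\binom{M-k_m}{r}}{\binom{M}{r}}=\prod_{i=0}^{r-1}\frac{M-k_m-i}{M-i}$ and using $\frac{M-k_m-i}{M-i}=1-\frac{k_m}{M-i}\leq 1-k_m/M$ gives $\al\geq1-(1-k_m/M)^r$ with $r=\lfloor\spar/\rho\rfloor$; since $\rho\leq\widehat{\rho}\leq\spar$ gives $\spar/\rho\geq1$ and hence $r\geq\spar/(2\rho)$, the exponent bound follows. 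I expect the main obstacle to be the false-alarm step: bounding the conditional test-positivity probability for a healthy family tightly enough to produce a genuine constant-factor gap below the infected-family signal -- exactly where $\rho\leq\lfloor F/(2k_f)\rfloor$ is essential -- so that a single threshold $d$ separates both Chernoff tails. Once this gap is secured, the concentration (the per-test scores are independent across tests because the rows are independent) and the combinatorial $\al$-bound are comparatively routine, and the oversized constant $\zeta=64\,\mathrm{e}^4$ simply reflects the crude bounds used along the way.
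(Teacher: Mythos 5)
Your proposal is correct, and while it follows the same overall architecture as the paper (random constant-weight rows, a threshold decoder, two-sided Chernoff bounds, a union bound over families, and evaluation at $\rho=\widehat{\rho}$ with $r=\lfloor\spar/\rho\rfloor\geq\spar/(2\rho)$ to get the closed form), you establish the crucial mean-separation in a genuinely different and leaner way. The paper's route is to compute $\mu_p=\E[S_f\mid f\notin\bad]$ and $\mu_m=\E[S_f\mid f\in\bad]$ exactly through the auxiliary function $h_x$ (a binomial-weighted hypergeometric sum), set the threshold at the midpoint $d=(\mu_m+\mu_p)/2$, and then spend most of Appendix~B proving the additive gap $\mu_m-\mu_p\geq\frac{\al\rho\TI}{2F}{\rm e}^{-2}$ (Proposition~\ref{prop:prereq}(iv)), plus a separate monotonicity lemma (Proposition~\ref{prop:opti}) to justify $\rho=\widehat{\rho}$. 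You instead get a clean multiplicative gap: for an infected family, $S_f$ stochastically dominates $A_f\sim\mathrm{Bin}(\TI,\rho\al/F)$ with mean $\mu$, and for a healthy family a one-line union bound over the $k_f$ infected families gives $\E[S_g]\leq\frac{\rho}{F}\,k_f\frac{\rho-1}{F-1}\,\al\cdot\TI\leq\mu/2$, precisely because $\rho\leq\lfloor F/(2k_f)\rfloor$ forces $k_f\frac{\rho-1}{F-1}\leq k_f\frac{\rho}{F}\leq\frac12$. A fixed threshold $d=\frac34\mu$ then separates both tails with exponents $\Theta(\mu)$, which with $\zeta=64\,{\rm e}^4$ comfortably yields $n^{-1-\lambda}$ per family. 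What your route buys is a substantially shorter proof of the key lemma with explicit, benign constants; what the paper's route buys is exact expressions for both means (reused verbatim in parts (ii)--(iv) of Proposition~\ref{prop:prereq}) and an optimality statement for the choice of $\rho$ rather than just an upper bound, though the latter is not needed for the inequality actually claimed in Theorem~\ref{thm1}.
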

\begin{proof}
The proof of Theorem~\ref{thm1} and the choice of the parameters $\left(\rho, d \right)$ (see~\eqref{eq:choice-parameters}) are provided in Section~\ref{sec:thmproof}.
\end{proof}
{
\begin{rem}
    Our scheme is noise-level-independent (NLI)~\cite{arpino2021} because the construction of $\cM$ does not depend on the noise parameter $\al$. 
    With no sparsity constraint, i.e., {$\spar =\nospar$,}
    we have that  $\widehat{\rho} = \left\lfloor\frac{F}{2 k_f}\right\rfloor$.
    {Our proposed scheme can then} be used with the classical dilution model~\cite{Atia09, arpino2021, CheraghchiIT2011, cheng2023, jain2022}, where: (i) the task is to identify $k$ defective items out of $n$ items; and (ii) the defective items exhibit a dilution effect with probability $\al$, {independent of $\rho$. This leads to $\TI = \Theta\left(\frac{k_f \log(n)}{\al}\right)$.} To the best of our knowledge, the best achievable bound in the literature for the dilution model using a NLI GT scheme is $\Theta\left(\frac{k \log(n)}{\al^2}\right)$ tests~\cite{jain2022} and our scheme outperforms this by a factor of $\al$.
\end{rem}
}
\subsection{Second Stage: Identifying All the Infected Members} \label{subsec:SecondStage} 
To identify all the $k_f k_m$ infected members, we can either perform individual testing or sparsity-constrained GT, for each of the $k_f$ families identified in the first stage. 
For the \emph{linear} regime of $k_m$ (i.e., ${k_m = \Theta(M)}$), individual testing (which has sparsity of $1$) is preferred. 
In this case, we would require
\begin{equation} \label{eq:T2-ns}
\TIIL =  k_f\Theta ( M)
\end{equation}
tests. Otherwise, if $k_m$ follows a sub-linear regime (i.e., ${k_m = o(M)}$), performing sparsity-constrained GT (see Section~\ref{sec:spar-GT}) in each of the $k_f$ infected families would be preferred. 
This would require a number of tests equal to
%
\begin{equation}
\label{eq:T2-s}
\TIIsL \!=\! \left \{
\begin{array}{ll}
\!\!\! k_f \Theta\left(\frac{M }{\spar} \frac{\log(n)}{\log(M)}\right) & \text{if } \spar \!=\! o\left(\!\frac{M}{k_m}\!\right),
\\
\!\!\! k_f{\Theta\left( k_m \log(n) 
\right)} 
& \text{if }  \spar \!=\! \Theta\left(\!\frac{M}{k_m}\!\right).
\end{array}
\right .
\end{equation}
Hence, depending on the regime of $M$, the number of tests $\TII$ for the second stage, can be obtained from~\eqref{eq:T2-ns}~or~\eqref{eq:T2-s}. 


\section{Analysis and Comparison} \label{sec:comparison}
In this section, we further analyze the performance (in terms of number of tests required) of the GT scheme {proposed in Section~\ref{sec:OurGTScheme}.
It should be noted that all the comparisons are order-wise, and the multiplicative constants behind the ${\Theta}$ notation are ignored. 
In particular,} from Theorem~\ref{thm1} we have the following corollary.
\begin{corollary}
\label{corollary:AnalysisT}
It holds that 
\begin{align} \label{eq:T-simplified}
\TI \leq 
\Theta \left( \max \lc \frac{FM}{\spar k_m} ,k_f \rc  \log(n)\right ).
\end{align}
\end{corollary}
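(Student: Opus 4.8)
The plan is to start directly from the closed-form upper bound established in Theorem~\ref{thm1}, namely $\TI \le \frac{\zeta(1+\lambda)F\log(n)}{f(\widehat{\rho})}$, and to treat the constant $\zeta(1+\lambda)$ (for a fixed $\lambda$) as absorbed into the $\Theta(\cdot)$. The corollary then reduces to the purely analytic claim that
\[
\frac{F}{f(\widehat{\rho})} = O\!\left(\max\lc \frac{FM}{\spar k_m}, k_f \rc\right),
\]
so the entire argument becomes a matter of lower-bounding $f(\widehat{\rho})$.

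First I would simplify $f$. Writing $f(\rho)=\rho\bigl(1-(1-k_m/M)^{\spar/(2\rho)}\bigr)$ and using the elementary inequality $\ln\frac{1}{1-x}\ge x$ for $x\in[0,1)$, one gets $(1-k_m/M)^{\spar/(2\rho)}\le e^{-\spar k_m/(2\rho M)}$, hence
\[
f(\rho)\;\ge\;\rho\left(1-e^{-\frac{\spar k_m}{2\rho M}}\right).
\]
Next, using $1-e^{-\beta}\ge \tfrac12\min\{\beta,1\}$ (valid for all $\beta\ge 0$) with $\beta=\spar k_m/(2\rho M)$, this becomes $f(\rho)\ge \tfrac12\min\{\spar k_m/(2M),\,\rho\}$. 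Evaluating at $\rho=\widehat{\rho}$ and inverting turns the minimum into a maximum:
\[
\frac{F}{f(\widehat{\rho})}\;\le\; 2F\max\lc\frac{2M}{\spar k_m},\frac{1}{\widehat{\rho}}\rc
=\max\lc\frac{4FM}{\spar k_m},\frac{2F}{\widehat{\rho}}\rc.
\]

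It then remains to bound $F/\widehat{\rho}$. Since $\widehat{\rho}=\min\{\spar,\lfloor F/(2k_f)\rfloor\}$, we have $F/\widehat{\rho}=\max\{F/\spar,\,F/\lfloor F/(2k_f)\rfloor\}$. The standing assumption $F\ge 2k_f$ gives $F/(2k_f)\ge 1$, so $\lfloor F/(2k_f)\rfloor\ge F/(4k_f)$ and therefore $F/\lfloor F/(2k_f)\rfloor\le 4k_f=\Theta(k_f)$; moreover $F/\spar\le FM/(\spar k_m)$ because $k_m\le M$. Substituting these into the previous display collapses everything to $\frac{F}{f(\widehat{\rho})}=O(\max\{FM/(\spar k_m),k_f\})$, which is the claim.

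I expect the only delicate point to be the bookkeeping of the $\min/\max$ duality together with the floor: one must confirm the two elementary inequalities ($\ln\frac{1}{1-x}\ge x$ and $1-e^{-\beta}\ge\tfrac12\min\{\beta,1\}$) across the full parameter range, and then notice that $k_m\le M$ is exactly what lets the spurious $F/\spar$ term be absorbed into $FM/(\spar k_m)$. A reassuring sanity check is to split on which argument attains $\widehat{\rho}$: when $\widehat{\rho}=\spar$ the bound is dominated by $FM/(\spar k_m)$, and when $\widehat{\rho}=\lfloor F/(2k_f)\rfloor$ it is dominated by $k_f$, matching the two entries of the maximum.
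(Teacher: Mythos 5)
Your proof is correct and follows essentially the same route as the paper's: both start from the closed-form bound $\TI \leq \zeta(1+\lambda)F\log(n)/f(\widehat{\rho})$ in Theorem~\ref{thm1} and lower-bound $f(\widehat{\rho})$ by roughly $\min\lc \frac{\spar k_m}{M}, \frac{F}{k_f}\rc$ via $1-x\le {\rm e}^{-x}$. The only difference is organizational: the paper splits into two regimes according to whether $\spar \ge \frac{FM}{k_f k_m}$ (and assumes $2k_f$ divides $F$ to drop the floor), whereas you fold both regimes into the single inequality $1-{\rm e}^{-\beta}\ge\frac{1}{2}\min\{\beta,1\}$ and handle the floor directly via $\lfloor x\rfloor \ge x/2$ for $x\ge 1$, arriving at the same bound up to constants.
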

\begin{proof}
The proof can be found in Appendix~\ref{app:ProofCorollary}.
\end{proof}
%
We now compare the performance of our scheme with existing results. Note that, due to the structure of the problem, the primary interest is {on a specific regime} of parameters, {namely: 
(i) the total number of infected members 
falls within a}
sparse regime, i.e., $k_f k_m = o(n)$ (otherwise individual testing would be optimum); (ii) once a family is infected, a significant number of its members get infected, i.e., $k_m = \Theta(M)$; and {(iii)} the size of the families {is not} very small, i.e., ${M ={\omega}(\log(n))}$ (otherwise each family can be thought 
as an individual). 


\noindent {\bf{$\bullet$ Ignoring the community structure.
}} A naive algorithm that does not exploit the community structure of the problem was discussed in Section~\ref{sec:spar-GT}. For the regime of interest on $(k_m,k_fk_m, M)$, the ratio of the total (both stages) number of tests required {by the two} algorithms can be bounded as 
\begin{align}
    \frac{\TI+\TIIL}{\TSNC} &\leq 
    \frac{\Theta \left( \max \lc \frac{n}{\spar k_m} ,k_f \rc  \log(n) + k_f M\right)}{\Theta\left(\max\lc\frac{n}{\spar}, k_f k_m \log(n)\rc\right)}
    \nonumber
    \\& = \Theta \left( \frac{\log(n)}{M} + \frac{1}{\log(n)}\right ).
    %
    \label{eq:FinalCompBaseline2}
\end{align}
From~\eqref{eq:FinalCompBaseline2}, we note that exploiting the community structure offers an order-wise reduction in the number of tests. 

\noindent {\bf{$\bullet$ Enforcing sparsity for the community-based scheme.}} As discussed in Section~\ref{sec:nS-C}, we can arrive at a sparse GT scheme that leverages the community structure of the problem. 
{This requires} $\TSCI$ tests {(see~\eqref{eq:T-noiseless})} in its first stage, while the number of tests required in the second stage is identical to that of our proposed algorithm (given in~\eqref{eq:T2-ns}~or~\eqref{eq:T2-s}). Since both {schemes} require the same number of tests in the second stage, we only compare their performance in the first stage. We {have that}
\begin{align}
&\frac{\TI}{\TSCI} \leq 
\frac{\Theta \left( \max \lc \frac{n \log(F)}{\spar k_m}, k_f \log(F) \rc \right )}{\Theta \left ( \max \lc \frac{n}{\spar}, k_f \log(F) \rc \right )} \notag
\\ &=
\left \{
\begin{array}{ll}
\Theta \left( \frac{\log F}{M} \right ) & \text{if } 1 \leq \spar < \frac{n}{k_m k_f}, 
\\ \Theta \left( \frac{\spar k_f \log(F)}{n} \right ) & \text{if } \frac{n}{k_f k_m} \leq \spar < \frac{n}{k_f \log(F)},
\\ \Theta(1) & \text{if } \spar \geq \frac{n}{k_f \log(F)}.
\end{array}
\right .
\label{eq:FinalCompBaseline1}
\end{align}
From~\eqref{eq:FinalCompBaseline1}, we note that our scheme outperforms (order-wise) the scheme of~\cite{pmlr2021} for a wide range of parameters. 
When no sparsity constraint is imposed {(i.e., ${\spar = \nospar}$),} $\TSCI = \Theta\left(k_f \log(n)\right)$, which is identical to the number of tests required by our scheme (see~\eqref{eq:T-simplified}). Therefore,  without any sparsity constraint, our scheme performs equivalent to the two-stage scheme of~\cite{pmlr2021}. It is also worth noting that the scheme of {Section~\ref{sec:nS-C}} is not feasible when $\spar < M$.

\section{Proof of Theorem~\ref{thm1}} \label{sec:thmproof}

In this section, we prove Theorem~\ref{thm1}. We use two propositions that are stated next. Specifically, Proposition~\ref{prop:prereq} bounds
\begin{align}
&\mf=\mathbb{E}\left[S_{f} \bigm|  f \notin \bad\right] \ \text{and} \ \mm=\mathbb{E}\left[S_{f} \bigm|  f \in \bad \right],
\end{align}
where $S_f = \sum\nolimits_{t=1}^{\TI } S_{f,t}$ with $S_{f,t}$ defined in~\eqref{eq:scoring}.
{We note that 
$S_f$ of 
$f \in \bad$ is expected to be higher than 
$S_f$ of 
$f \notin \bad$. This is formally shown by Proposition~\ref{prop:prereq}.}
\begin{prop} \label{prop:prereq}
    For $x \in [k_f]$, let $\hh_x$ be defined as
\begin{align}
\label{eq:hx}
    \hh_x := \sum_{\ell=0}^x \binom{x}{\ell} \al^\ell (1-\al)^{x-\ell} \left(1-\frac{ \binom{F-\ell-1}{\rho-1}}{\binom{F}{\rho}} \right),
\end{align}
for $\al$ 
given by~\eqref{eq:define-al}.
Then, for any $\rho$ in the interval $\left[\left\lfloor\frac{F}{2 k_f}\right\rfloor\right]$,
\begin{align}
    &\text{(i) } \hh_x \leq \left(1-\frac{\rho}{F}\right)+\frac{\al \rho}{F}, \nonumber \\
    &\text{(ii) } \mf = \TI \left(\hh_{k_f} - \left(1-\frac{\rho}{F}\right)\right) \leq \TI \frac{\al \rho }{F}, \nonumber \\
    %
    &\text{(iii) } \mm = \TI \left(\al \!+\! (1\!-\! \al) \hh_{k_f-1} \!-\! \left(1-\frac{\rho}{F}\right) \right) 
    \leq  \TI \frac{2 \al  \rho }{F}, \nonumber \\
    &\text{(iv) } \mm - \mf \geq
    \frac{\al \rho \TI}{2F} {\rm{e}}^{-2}. \notag
\end{align}
\end{prop}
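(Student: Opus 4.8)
The plan is to reduce everything to a single test and then assemble the four claims. Since each row of $\cM$ and each representative set is drawn independently, the scores $\{S_{f,t}\}_{t}$ are i.i.d., so by linearity $\mf=\TI\,\E[S_{f,t}\mid f\notin\bad]$ and $\mm=\TI\,\E[S_{f,t}\mid f\in\bad]$, and it suffices to evaluate one test. Because $\cx$ is the family-level indicator, $\sy_t=1$ holds exactly when some \emph{active} infected family is selected in test $t$, and $S_{f,t}=\cM_{t,f}\wedge\sy_t$. I would condition on the number $\ell$ of active infected families (binomial with parameters $(k_f,\al)$ for a healthy $f$) and use $\P[\cM_{t,f}=1,\ \text{no active infected family selected}\mid \ell]=\binom{F-\ell-1}{\rho-1}/\binom{F}{\rho}$ to obtain $\E[S_{f,t}\mid f\notin\bad]=\hh_{k_f}-(1-\rho/F)$. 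For an infected $f$ I split on its own activeness: if $f$ is active (prob.\ $\al$) then being selected forces $\sy_t=1$, contributing $\al\rho/F$; if inactive (prob.\ $1-\al$) then $f$ behaves like a healthy family against the remaining $k_f-1$ infected families, contributing $(1-\al)(\hh_{k_f-1}-(1-\rho/F))$. A short rearrangement yields $\E[S_{f,t}\mid f\in\bad]=\al+(1-\al)\hh_{k_f-1}-(1-\rho/F)$, establishing the equalities in (ii) and (iii).

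It is convenient to write $\hh_x=1-\tfrac{\rho}{F}Q_x$, where $Q_x:=\sum_{\ell=0}^x\binom{x}{\ell}\al^\ell(1-\al)^{x-\ell}\binom{F-\ell-1}{\rho-1}/\binom{F-1}{\rho-1}$ is, conditioned on $f$ being selected, the probability that none of the $x$ infected candidate families is simultaneously active and among the other $\rho-1$ selected slots. For (i) I would bound $Q_x\ge 1-\al$ by a union bound: a given candidate is both active and selected with probability $\al\cdot\tfrac{\rho-1}{F-1}$, so the probability that \emph{some} candidate is active-and-selected is at most $x\al(\rho-1)/(F-1)\le\al$, where the last step uses $x\le k_f$ and $\rho\le\lfloor F/(2k_f)\rfloor$ (which give $x(\rho-1)<F/2\le F-1$). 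Hence $\hh_x\le 1-\tfrac{\rho}{F}(1-\al)=(1-\rho/F)+\al\rho/F$, proving (i). Substituting $x=k_f$ into the identity from (ii) gives $\mf\le\TI\,\al\rho/F$; substituting $x=k_f-1$ and simplifying gives $\mm\le\TI\,\al\rho(2-\al)/F\le 2\al\rho\TI/F$, proving (iii).

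For (iv) I would compute the gap exactly rather than bound it crudely. Using $\hh_x=1-\tfrac{\rho}{F}Q_x$ in the identities gives $\mm-\mf=\TI\tfrac{\rho}{F}\bigl(Q_{k_f}-(1-\al)Q_{k_f-1}\bigr)$. The key step is a one-family recursion: conditioning on the activeness of the $k_f$-th candidate (independent of the selection and of the other candidates) shows $Q_{k_f}=(1-\al)Q_{k_f-1}+\al R$, where $R$ is the probability that candidate $k_f$ is \emph{not} selected and no active family among the first $k_f-1$ is selected. Thus $\mm-\mf=\TI\tfrac{\rho}{F}\al R$, and it remains to lower-bound $R$. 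I would use the inclusion $R\ge\P[\text{none of the }k_f\text{ candidates selected}]=\binom{F-1-k_f}{\rho-1}/\binom{F-1}{\rho-1}=\prod_{j=0}^{k_f-1}\bigl(1-\tfrac{\rho-1}{F-1-j}\bigr)$; since each subtracted term is at most $\tfrac{\rho}{F-k_f}\le\tfrac{2\rho}{F}\le\tfrac1{k_f}\le\tfrac12$, the bound $1-u\ge e^{-2u}$ applies, and $\sum_{j=0}^{k_f-1}\tfrac{\rho-1}{F-1-j}<\tfrac{2k_f\rho}{F}\le 1$ gives $R\ge e^{-2}\ge\tfrac12 e^{-2}$, whence $\mm-\mf\ge\tfrac{\al\rho\TI}{2F}e^{-2}$.

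The main obstacle I anticipate is (iv): the reduction needs the \emph{exact} recursion for $Q_x$, since a union bound there would discard the entire signal, and the lower bound on $R$ must respect the fixed-size random selection (for which inclusions of distinct candidates are negatively correlated). I therefore rely on the clean product form of the ``none selected'' probability together with the constraint $\rho\le\lfloor F/(2k_f)\rfloor$ to push the exponent below $1$. Parts (i)--(iii) are, by contrast, routine once the single-test expectations are identified.
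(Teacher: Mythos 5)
Your proposal is correct, and while the skeleton (reduce to one test, condition on the number of active infected families to get the exact expressions for $\mf$ and $\mm$ in (ii)--(iii)) coincides with the paper's, the two key estimates are obtained by a genuinely different, more probabilistic route. For (i), the paper expands $1-\binom{F-\ell-1}{\rho-1}/\binom{F}{\rho}$ as $1-\frac{\rho}{F-\ell}\prod_{j=1}^{\ell}(1-\frac{\rho}{F-j+1})$ and grinds through the binomial theorem and Bernoulli's inequality, whereas you rewrite $\hh_x=1-\frac{\rho}{F}Q_x$ and lower-bound $Q_x\geq 1-\al$ by a single union bound over the events ``candidate $i$ is active and occupies one of the other $\rho-1$ slots''; this is cleaner and makes the role of $\rho\leq\lfloor F/(2k_f)\rfloor$ transparent. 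For (iv), the paper manipulates $\al-(\hh_{k_f}-(1-\al)\hh_{k_f-1})$ via the identity $\binom{k_f}{\ell}-\binom{k_f-1}{\ell}=\binom{k_f-1}{\ell-1}$ and a chain of term-by-term bounds ending in ${\rm e}^{-x/(1-x)}\leq 1-x$; you instead prove the exact one-family recursion $Q_{k_f}=(1-\al)Q_{k_f-1}+\al R$, which collapses the gap to $\mm-\mf=\TI\frac{\al\rho}{F}R$ and reduces the whole of (iv) to lower-bounding the ``no candidate selected'' probability $\binom{F-1-k_f}{\rho-1}/\binom{F-1}{\rho-1}\geq{\rm e}^{-2}$. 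Your identification of where the signal lives is conceptually sharper, the fixed-size-selection subtleties are handled correctly (you only use the marginal of the other $\rho-1$ slots, and the inclusion $R\geq\P[\text{none selected}]$ is monotone, so negative correlation is harmless), and you even gain a factor of $2$ over the stated constant. What the paper's algebraic route buys in exchange is that it never needs to argue about conditional selection events at all, only about explicit binomial sums; but both proofs ultimately lean on the same constraint $\rho\leq\lfloor F/(2k_f)\rfloor$ and the same exponential inequalities, and yours is a valid (indeed slightly stronger) proof of all four parts.
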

\begin{proof}
The proof can be found in Appendix~\ref{proof:prereq}.
\end{proof}
The next proposition will be useful in the proof of Theorem~\ref{thm1} for choosing the family-sparsity parameter $\rho$.
\begin{prop} \label{prop:opti}
    Let $U \in \mathbb{N}$ and $\prb \in (0,1)$. Then,
    \begin{align}
        & \arg\max_{\rho\in [U]} \rho \left(1-\prb^{\frac{\spar}{\rho}}\right) = U. 
    \end{align}
\end{prop}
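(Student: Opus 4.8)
The plan is to regard $g(\rho) := \rho\left(1-\prb^{\spar/\rho}\right)$ as a function of a continuous variable $\rho \in (0,\infty)$ and to show that $g$ is strictly increasing there. Once this is established, the maximum of $g$ over the finite set $[U] = \{1,\dots,U\}$ must be attained at the largest element, namely $\rho = U$; moreover, strict monotonicity guarantees that this maximizer is unique, which is exactly the $\arg\max$ statement of the proposition.

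To carry this out, I would first rewrite the exponential in a more convenient form. Since $\prb \in (0,1)$, we may set $c := \ln(1/\prb) > 0$, so that $\prb^{\spar/\rho} = e^{-c\spar/\rho}$ and $g(\rho) = \rho - \rho\, e^{-c\spar/\rho}$. Differentiating with respect to $\rho$ (applying the chain rule to the exponent $-c\spar/\rho$, whose derivative is $c\spar/\rho^2$) gives
\[
g'(\rho) = 1 - e^{-c\spar/\rho}\left(1 + \frac{c\spar}{\rho}\right).
\]

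The key step is then a single elementary inequality. Writing $x := c\spar/\rho$, which is strictly positive for every $\rho > 0$, the sign of $g'(\rho)$ is governed by whether $e^{-x}(1+x) < 1$, i.e.\ whether $1+x < e^x$. This last inequality holds for all $x > 0$, for instance by the Taylor expansion $e^x = 1 + x + x^2/2 + \cdots$, or equivalently by strict convexity of $e^x$ together with its tangent line at the origin. Hence $g'(\rho) > 0$ for all $\rho > 0$, so $g$ is strictly increasing and the proposition follows.

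I do not anticipate a genuine obstacle here: the argument is a short calculus computation resting on the standard bound $e^x > 1+x$. The only points requiring a little care are (i) choosing to establish monotonicity of the continuous extension rather than comparing consecutive integer values $g(\rho+1)$ and $g(\rho)$ directly, which would be considerably more cumbersome; and (ii) observing that the \emph{strict} inequality yields not merely a maximizer at $\rho = U$ but its uniqueness, matching the $\arg\max$ formulation.
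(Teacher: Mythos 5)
Your proof is correct and follows essentially the same route as the paper: extend $g(\rho)=\rho\left(1-\prb^{\spar/\rho}\right)$ to a continuous variable, differentiate, and show the derivative is nonnegative so the maximum over $[U]$ sits at $\rho=U$. The only difference is cosmetic but in your favor: you close the sign argument with the single inequality $e^{x}>1+x$ for $x>0$, whereas the paper reaches the same conclusion through a longer chain ($\log(1+u)\leq u$, a geometric-series expansion, and Bernoulli's inequality), and your strict inequality additionally gives uniqueness of the maximizer.
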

\begin{proof}
The proof can be found in Appendix~\ref{proof:opti}.
\end{proof}
We are ready to prove Theorem~\ref{thm1}.
Let $\mathsf{P}_{+}$ and $\mathsf{P}_{-}$ be the probabilities of false positive and misdetection errors of the $d$-threshold decoder for a given $f\in [F]$, respectively, i.e.,
\begin{align}
    \mathsf{P}_{+} &= \mathbb{P}\left[S_f \geq d | f \notin \bad \right] \ \text{and} \ \mathsf{P}_{-}  = \mathbb{P}\left[S_f < d | f \in \bad \right].
\end{align}
By the union bound, the total error probability $\mathsf{P}_e$ 
can be upper bounded as
\begin{align} \label{eq:unionbound} 
    \mathsf{P}_e \leq (F-k_f) \mathsf{P}_{+} + k_f \mathsf{P}_{-}.
\end{align}
We choose the following parameters,
\begin{align} \label{eq:choice-parameters}
        \rho = \widehat{\rho},  \ d =  \frac{\mm\! +\!\mf}{2}, \ \TI = \frac{\zeta (1
        +\lambda) F\log(n)}{\rho \al},
    \end{align} 
\noindent where $\widehat{\rho}$ and $\zeta$ are given in~\eqref{eq:TMainTheorem}, $\al$ is given in~\eqref{eq:define-al}, and $\lambda > 0$ is a constant. 
With these choices, we bound $\mathsf{P}_{+}$ and $\mathsf{P}_{-}$ as
\begin{align} 
    \mathrm{P}_+ 
    & = \mathbb{P}\left[S_f \!\geq \! \frac{\mm+\mf}{2} \Bigm| f \notin \bad \right]  \nonumber\\
    %
    &\stackrel{\rm{(a)}}{=} \mathbb{P}\left[S_f \!\geq \! \mf \left(1\!+\!\delta_p \right) \!\Bigm|\! f \notin \bad \right]  \stackrel{\rm{(b)}}{\leq} \exp\left(-\frac{\delta_p^2 \mf}{2+\delta_p}\right) 
     \nonumber\\ 
     & = \exp\left(\!-\frac{(\mm\!-\!\mf)^2}{6 \mf\!+\!2\mm}\right)  \!\stackrel{\rm{(c)}}{\leq}\! \exp\!\left(\!-\frac{\!{\rm{e}}^{-4}  \al \rho \TI}{40 F} \!\right)  \nonumber\\
     &\stackrel{\rm{(d)}}{=} \exp\left(-1.6 (1+\lambda)\log(\N)\right) \leq \N^{-1-\lambda},
     \label{eq:false_pos}
\end{align}
where the labeled (in)equalities follow from:
$\rm{(a)}$ letting ${\delta_p = \frac{\mm-\mf}{2\mf}}\geq 0$;
$\rm{(b)}$ applying Chernoff's bound;
$\rm{(c)}$ using Proposition~\ref{prop:prereq};
and $\rm{(d)}$ using  $\TI$ in~\eqref{eq:choice-parameters}.

The misdetection error {probability can be bounded~as}
\begin{align} \label{eq:false_neg}
    \mathrm{P}_- 
    &\stackrel{\rm{(a)}}{=} \mathbb{P}\left[S_f < \mm \left(1-\delta_m \right) \Bigm| f \in \bad \right]  \nonumber\\
     &\stackrel{\rm{(b)}}{\leq} \exp\left(-\frac{\delta_m^2 \mm}{2}\right) {\stackrel{\rm{(c)}}{\leq} \N^{-1-\lambda}},
\end{align}
where the labeled (in)equalities follow from:
$\rm{(a)}$ {letting} ${\delta_m = \frac{\mm-\mf}{2\mm}} \in (0,0.5]$;
$\rm{(b)}$ using Chernoff's bound;
{and $\rm{(c)}$ using Proposition~\ref{prop:prereq} and $\TI$ in~\eqref{eq:choice-parameters}.}

Combining~\eqref{eq:false_pos} and~\eqref{eq:false_neg} together with the union bound in~\eqref{eq:unionbound}, we get $\mathsf{P}_e\leq n^{-\lambda}$. Furthermore, the number of tests that suffice to achieve this probability of error is given by
\begin{align} 
\TI &  \stackrel{\rm{(a)}}{=} \frac{\zeta (1+\lambda) \frac{F}{\rho}\log(\N)}{ \al} 
\nonumber\\& \stackrel{\rm{(b)}}{=} \frac{\zeta (1\!+\!\lambda) F\log(\N)}{\rho \left(1\!-\!\frac{\binom{M\!-\!k_m}{r}}{\binom{M}{r}}\right)}  \!=\! \frac{\zeta (1\!+\!\lambda) F\log(\N)}{\rho \left(1\!-\!\prod_{j=1}^{r} \left(1 \!-\!\frac{k_m}{M\!-\!j+1}\right)\right)}
\nonumber\\& \leq \frac{\zeta (1+\lambda) F\log(\N)}{\rho \left(1 - \left(1 -\frac{k_m}{M}\right)^r\right)} 
\stackrel{\rm{(c)}}{\leq} \frac{\zeta (1+\lambda) F\log(\N)}{\rho \left(1 - \left(1 -\frac{k_m}{M}\right)^{\frac{\spar}{2\rho}}\right)},
%
\label{eq:tests}
\end{align}
where the labeled (in)equalities follow from:
$\rm{(a)}$ using $\TI$ in~\eqref{eq:choice-parameters};
$\rm{(b)}$ using $\al$ in~\eqref{eq:define-al};
and $\rm{(c)}$ the fact that $\rho \leq \spar$ and hence, $r = \left\lfloor\frac{\spar}{\rho}\right\rfloor \geq \frac{\spar}{2\rho}$. 

To conclude the proof, 
we 
find the value of $\rho$ that minimizes~\eqref{eq:tests}. For this, we analyze the denominator {of the right-hand side} of~\eqref{eq:tests}, which is 
$f(\rho)$ defined in~\eqref{eq:TMainTheorem},
where $\rho \leq \spar$.
In the proof above, we also used Proposition~\ref{prop:prereq}, which requires $\rho \leq \left\lfloor\frac{F}{2 k_f}\right\rfloor$.
Thus, {we need $\rho \leq \widehat{\rho}$, where $\widehat{\rho}$ is defined in~\eqref{eq:TMainTheorem}.}
We therefore seek to maximize $f(\rho)$ (and hence, minimize $\TI$) over the set $\rho \in \left[\widehat{\rho}\right]$.
Substituting $\prb = \left(1-\frac{k_m}{M}\right)^{\frac{1}{2}}$ in Proposition~\ref{prop:opti}, it follows that the optimal choice of $\rho$ is $\rho = \widehat{\rho}$. 
Using $\rho = \widehat{\rho}$ in~\eqref{eq:tests} concludes the proof of Theorem~\ref{thm1}.

\newpage
\appendices

\section{Proof of Corollary~\ref{corollary:AnalysisT}}
\label{app:ProofCorollary}
We start by noting that, since we perform an order-wise analysis, we can make a simplifying assumption that $2 k_f$ divides $F$. 
With reference to~\eqref{eq:TMainTheorem}, this results in
\begin{equation}
\label{eq:RhoT}
\widehat{\rho} =\min\left\{\spar, \left\lfloor\frac{F}{2 k_f}\right\rfloor \right\}= \min\left\{\spar, \frac{F}{2 k_f} \right\}.
\end{equation}
We now analyze two different regimes.

\noindent {\bf{Regime~I} ${\spar \geq\frac{FM}{k_f k_m}}$:} In this case, with reference to~\eqref{eq:RhoT}, we have that $\widehat{\rho} =  \frac{F}{2 k_f}$ and hence, $f(\widehat{\rho})$ in~\eqref{eq:TMainTheorem} can be lower bounded as follows,
\begin{align}
    f(\widehat{\rho}) &= \widehat{\rho} \left(1-\left(1-\frac{k_m}{M}\right)^{\frac{\spar}{2\widehat{\rho}}}\right) \notag  \\
    &\geq \frac{F}{2 k_f} \left(1-\left(1-\frac{k_m}{M}\right)^{\frac{M}{k_m}}\right)  \notag
    \\& \geq  \frac{F}{2 k_f} \left(1-{\rm{e}}^{-1}\right),
    \label{eq:compare_classical}
\end{align}
where the last step follows since $1-x \leq {\rm{e}}^{-x}$ for $x \in [0,1]$.

\noindent {\bf{Regime~II:} $ {1 \leq \spar < \frac{FM}{k_f k_m}}$:} In this case, we have that $f(\widehat{\rho})$ in~\eqref{eq:TMainTheorem} can be lower bounded as follows,
\begin{align}
    f(\widehat{\rho}) &= \widehat{\rho} \left(1-\left(1-\frac{k_m}{M}\right)^{\frac{\spar}{2\widehat{\rho}}}\right) \notag \\
    &\stackrel{\rm{(a)}}{\geq}  \min\left\{\spar, \frac{F}{2 k_f}\right\} \left(1-{\rm{e}}^{-\frac{\spar k_m}{2 \widehat{\rho} M}}\right) \notag \\
    &\stackrel{\rm{(b)}}{\geq} 
        \min\left\{\spar, \frac{F}{2 k_f}\right\} \frac{\spar k_m}{4 \widehat{\rho} M} \notag
        \\& = \frac{F}{2k_f}\min\left\{\frac{2 \spar k_f}{F}, 1\right\}  \frac{\spar k_m}{4 \widehat{\rho} M} \notag \\
        &\stackrel{\rm{(c)}}{=} \frac{F k_m}{8 k_f M}  \min\left\{\frac{2 \spar k_f}{F}, 1\right\} \max\left\{\frac{2 \spar k_f}{F}, 1\right\} \notag
        \\& = \frac{ \spar k_m}{4 M},
        \label{eq:compare_sparsity}
\end{align}
where the labeled (in)equalities follow from:
$\rm{(a)}$ using~\eqref{eq:RhoT} and since $1-x \leq {\rm{e}}^{-x}$ for $x \in [0,1]$;
$\rm{(b)}$ the fact that, from~\eqref{eq:RhoT}, we have that
\begin{equation}
\frac{\spar}{\widehat{\rho}} 
= \max\left\{1,\frac{2 k_f\spar}{F}\right\} \leq \max \left \{ 1, \frac{2M}{k_m} \right \} = \frac{2M}{k_m},
\label{eq:RhoTOverRhoHat}
\end{equation}
and since ${\rm{e}}^{-x} \leq 1-\frac{x}{2}$ for $x \in [0,1]$; 
and $\rm{(c)}$ using the expression of $\spar/\widehat{\rho}$ in~\eqref{eq:RhoTOverRhoHat}.

The proof of Corollary~\ref{corollary:AnalysisT} is concluded by substituting the bounds in~\eqref{eq:compare_classical} and~\eqref{eq:compare_sparsity} inside $\TI$ in Theorem~\ref{thm1}.

\section{Proof of Proposition~\ref{prop:prereq}} \label{proof:prereq}
\subsection{Proof of Proposition~\ref{prop:prereq}(i)}

Note that $\hh_x$ is defined  for ${x \in [k_f]}$. In order to prove the first property, we define
\begin{equation}
v(x,\ell,\al) = \binom{x}{\ell} \al^\ell (1-\al)^{x-\ell}.
\end{equation}
Then, we have  
\begin{align} \label{eq:bound_hx}
    \hh_x & =\sum_{\ell=0}^{x} \binom{x}{\ell} \al^\ell (1-\al)^{x-\ell}   \left(1-\frac{ \binom{F-\ell-1}{\rho-1}}{\binom{F}{\rho}} \right) \notag \\
     &\stackrel{\rm{(a)}}{=} \sum_{\ell=0}^{x} v(x,\ell,\al) \left(  1 \!-\! \frac{\rho}{F\!-\!\ell}\prod_{j=1}^{\ell}\left(1\!-\!\frac{\rho}{F\!-\!j\!+\!1}\right) \right)  \notag \\
&\leq \sum_{\ell=0}^{x} v(x,\ell,\al)\left(1-\frac{\rho}{F} \left(1-\frac{\rho}{F-\ell+1}\right)^\ell \right) \notag \\
    &\stackrel{\rm{(b)}}{\leq}  \sum_{\ell=0}^x v(x,\ell,\al) \left(1-\frac{\rho}{F} \left(1-\frac{2\rho}{F}\right)^\ell \right)  \notag \\
    &=  \sum_{\ell=0}^x \binom{x}{\ell} \al^{\ell} (1-\al)^{x-\ell} \left(1-\frac{\rho}{F} \left(1-\frac{2\rho}{F}\right)^\ell \right) \notag \\
    &=  \sum_{\ell=0}^x \binom{x}{\ell} \al^{\ell} (1-\al)^{x-\ell}  \notag \\
     &\phantom{=} -\frac{\rho}{F} \sum_{\ell=0}^x \binom{x}{\ell} (1-\al)^{x-\ell}  \left(\al-\frac{2\rho\al}{F}\right)^\ell  \notag \\
    & \stackrel{\rm{(c)}}{=} 1-\frac{\rho}{F} \left(1-\frac{2\rho \al}{F}\right)^{x}  \notag \\
    & \stackrel{\rm{(d)}}{\leq} 1-\frac{\rho}{F} \left(1-\frac{2 \al x \rho}{F} \right) \notag \\
    & \stackrel{\rm{(e)}}{\leq} \left(1-\frac{\rho}{F}\right) + \frac{\al \rho}{ F},
\end{align} 
where the labeled (in)equalities follow from:
$\rm{(a)}$ 
expanding the binomial terms;
$\rm{(b)}$ the assumption in 
Proposition~\ref{prop:prereq} that $k_f \leq \frac{F}{2}$, which leads to $ \ell-1 \leq \ell \leq x \leq k_f \leq \frac{F}{2}$;
{$\rm{(c)}$ the Binomial theorem;}
$\rm{(d)}$ the Bernoulli's inequality;
$\rm{(e)}$ the assumption in Proposition~\ref{prop:prereq} that $\rho  \leq \left\lfloor\frac{F}{2 k_f}\right\rfloor \leq \frac{F}{2 k_f}$ {and the fact that $x \leq k_f$.}

\subsection{Proof of Proposition~\ref{prop:prereq}(ii)}
\label{sec:Prop1ii}
We next prove the upper bound on the expected score $\mf$ of a healthy family. In particular, we have that
\begin{align} \label{eq:bound_mf:0}
    \mf & = \mathbb{E}\left[S_{f} \bigm|  f \notin \bad\right] \notag
    \\& \stackrel{\rm{(a)}}{=} \sum_{t=1}^{\TI} \mathbb{E}\left[S_{f,t} \bigm|  f \notin \bad\right] \notag \\
    & \stackrel{\rm{(b)}}{=} \TI \mathbb{E}\left[S_{f,t}  \bigm|  f \notin \bad \right] \notag \\
    &=\TI \mathbb{P}\left[S_{f,t}=1  \bigm|  f \notin \bad \right] \notag  \\
    &\stackrel{\rm{(c)}}{=} \TI \sum_{\ell=0}^{k_f} \mathbb{P}\left[|\bad_{t}|\hspace{-1pt}=\hspace{-1pt}\ell \!\bigm|\! f \notin \bad \right] 
    \hspace{-1pt}\cdot \mathbb{P}\left[S_{f,t}\!=\! 1 \!\bigm| \! f \notin \bad, |\bad_t|\hspace{-1pt}=\hspace{-1pt}\ell \right] \notag \\
    &\stackrel{\rm{(d)}}{=} \TI \sum_{\ell=0}^{k_f} \mathbb{P}\left[|\bad_{t}|=\ell \right] \mathbb{P}\left[S_{f,t}\!=\! 1 \!\!\bigm|\!\!  f \notin \bad, |\bad_t|=\ell \right]  \notag \\
    &\stackrel{\rm{(e)}}{=} \TI \sum_{\ell=0}^{k_f} \mathbb{P}\left[|\bad_{t}|\hspace{-1pt}=\hspace{-1pt}\ell \right] 
    \hspace{-1pt}\cdot \mathbb{P}\left[\cM_{t, f} \hspace{-2pt}=\hspace{-2pt} 1, \sy_t \hspace{-2pt}=\hspace{-2pt}1\!\hspace{-1pt} \bigm| \!\hspace{-1pt} f \hspace{-1pt}\notin\hspace{-1pt} \bad, |\bad_t|\hspace{-1pt}=\hspace{-1pt}\ell \right],
    \end{align}
where the labeled (in)equalities follow from:
$\rm{(a)}$ the linearity of expectation;
$\rm{(b)}$ the fact that, by design, $S_{f,t}$ is identically distributed for all $t \in [\TI]$; $\rm{(c)}$ using the law of total probability;
$\rm{(d)}$ the fact that the number of active infected families $|\bad_t|$ is independent of $f$ being a healthy family and therefore, ${\mathbb{P}\left[|\bad_t|=\ell | f \notin \bad \right] = \mathbb{P}\left[|\bad_t|=\ell\right]}$; 
$\rm{(e)}$ the definition of $S_{f,t}$ in~\eqref{eq:scoring}.  

Then, note that $|\bad_{t}|$ admits a binomial distribution with parameters $(k_f, \alpha)$, leading to 
\begin{align}\label{eq:bound_mf:1}
    \mathbb{P}\left[|\bad_{t}|=\ell \right] = \binom{k_f}{\ell} \al^{\ell} (1-\al)^{k_f-\ell}. 
\end{align}
Moreover, we have 
\begin{align}\label{eq:bound_mf:2}
    \mathbb{P}\Big[\cM_{t, f}& \hspace{-2pt}=\hspace{-2pt} 1, \sy_t \hspace{-2pt}=\hspace{-2pt}1 \!\hspace{-1pt} \bigm| \!\hspace{-1pt} f \hspace{-1pt}\notin\hspace{-1pt} \bad, |\bad_t|\hspace{-1pt}=\hspace{-1pt}\ell \Big] \notag \\
    & = 1- \mathbb{P}\left[\cM_{t, f} = 1, \sy_t =0 \bigm|  f \notin \bad, |\bad_t|=\ell \right]  \nonumber \\
    & \phantom{=1} -\mathbb{P}\left[\cM_{t, f} = 0 \bigm|  f \notin \bad, |\bad_t|=\ell \right]\notag\\
    &= 1-\frac{\binom{F-\ell-1}{\rho-1}}{\binom{F}{\rho}}- \frac{\binom{F-1}{\rho}}{\binom{F}{\rho}}.
\end{align}
Substituting~\eqref{eq:bound_mf:1} and~\eqref{eq:bound_mf:2} into~\eqref{eq:bound_mf:0}, we arrive at 
    \begin{align}\label{eq:bound_mf}
    \mf \!
    &= \TI  \sum_{\ell=0}^{k_f} \binom{k_f}{\ell} \al^{\ell} (1-\al)^{k_f-\ell} 
    \left(1-\frac{\binom{F-\ell-1}{\rho-1}}{\binom{F}{\rho}}- \frac{\binom{F-1}{\rho}}{\binom{F}{\rho}}\right) \notag \\
    &{= \TI  \sum_{\ell=0}^{k_f} \binom{k_f}{\ell} \al^{\ell} (1-\al)^{k_f-\ell} 
    \left(1\!-\!\frac{\binom{F-\ell-1}{\rho-1}}{\binom{F}{\rho}}- \left(1-\frac{\rho}{F}\right)\right)} \notag \\
    &{= \TI 
    \sum_{\ell=0}^{k_f} \binom{k_f}{\ell} \al^{\ell} (1\!-\!\al)^{k_f-\ell} \left(1\!-\!\frac{\binom{F-\ell-1}{\rho-1}}{\binom{F}{\rho}} \right)\hspace{-1pt}-\hspace{-1pt} \TI\left(1\!-\!\frac{\rho}{F}\right)} \notag \\
    & \stackrel{(f)}{=} \TI \left(\hh_{k_f} - \left(1-\frac{\rho}{F}\right)\right) 
     \stackrel{\rm{(g)}}{\leq } \TI\frac{\al\rho}{ F},
\end{align}
where the step labeled by~$\rm{(f)}$ follows from the definition of $\hh_x$ in~Proposition~\ref{prop:prereq}; and the inequality in~$\rm{(g)}$  is a consequence of Proposition~\ref{prop:prereq}(i).

\subsection{Proof of Proposition~\ref{prop:prereq}(iii)}
{Next, we prove the upper bound on the expected {score $\mm$} of an infected family. In particular, we have that
\begin{align} \label{eq:bound_mm}
    \mm &= \mathbb{E}\left[S_{f} \bigm|  f  \in  \bad\right] \notag \\
    & \stackrel{\rm{(a)}}{=}  \sum_{t=1}^{\TI} \mathbb{E}\!\left[S_{f,t} \bigm|  f \in \bad\right] \notag \\
    & \stackrel{\rm{(b)}}{=}  \TI \mathbb{E}\left[S_{f,t} \bigm|  f \in \bad\right]   \notag \\
    & =  \TI \mathbb{P}\left[S_{f,t}=1 \bigm|  f \in \bad\right]  \notag  \\
    & \stackrel{\rm{(c)}}{=} \TI \Big(\mathbb{P}\left[f\in \bad_t| f\in \bad\right] \mathbb{P}\left[S_{f,t}=1 \bigm| \! f \in \bad, f \in \bad_t \right] \nonumber \\ 
    &\;\; +\mathbb{P}\left[f\notin \bad_t | f\in \bad\right] \mathbb{P}\left[S_{f,t}=1  \! \bigm| \! f \in \bad, f \notin \bad_t \right] \Big) \notag  \\
    & \stackrel{\rm{(d)}}{=} \TI \Big(\al \mathbb{P}\left[S_{f,t}=1 \bigm|  f \in \bad, f \in \bad_t \right] \nonumber \\ 
    &\qquad \quad +(1-\al) \mathbb{P}\left[S_{f,t}=1   \bigm|  f \in \bad, f \notin \bad_t \right] \Big)  \notag \\
    & \stackrel{\rm{(e)}}{=} \TI \Big(\al \mathbb{P}\left[\cM_{t,f}=1, \sy_t=1 \bigm| \! f \in \bad, f \in \bad_t \right] \nonumber \\ 
    &\quad +(1\!-\!\al) \mathbb{P}\left[\cM_{t,f}\!=\! 1, \sy_t=1 \! \bigm| \! f \in \bad, f \!\notin\! \bad_t \right] \Big) \notag  \\
    & \stackrel{\rm{(f)}}{=} \TI \Big(\al \mathbb{P}\left[\cM_{t,f}=1 \bigm|  f \in \bad, f \in \bad_t \right] \nonumber \\ 
    &\quad +(1-\al) \left(\hh_{k_f-1} - \left(1-\frac{\rho}{F}\right)\right) \Big) \notag  \\
    & \stackrel{\rm{(g)}}{=} \TI \Big(\al \left(\frac{\rho}{F}\right) \! + \!(1\!-\!\al) \left(\hh_{k_f-1} - \left(1-\frac{\rho}{F}\right)\right) \Big)  \notag \\
    & = \TI \Big(\al + \!(1\!-\!\al) \hh_{k_f-1} - \left(1-\frac{\rho}{F}\right) \Big) \notag\\
    & \stackrel{\rm{(h)}}{\leq} \TI \Big(\al + \!(1\!-\!\al) \left(1-\frac{\rho}{F}+\frac{\al \rho }{ F}\right) \!-\! \left(1-\frac{\rho}{F}\right) \Big)\notag \\
    &=  \TI\left(\frac{\al \rho}{F} + \frac{(1-\al) \al \rho }{F}\right) \notag
    \\ & \leq  \TI\frac{2 \al \rho}{F},
\end{align}
where the labeled (in)equalities follow from:
$\rm{(a)}$ the linearity of expectation;
$\rm{(b)}$ the fact that, by design, $S_{f,t}$ is identically distributed for all $t \in [\TI]$; 
$\rm{(c)}$ using the law of total probability; 
$\rm{(d)}$ the fact that each infected family is active with probability $\al$ (see~\eqref{eq:define-al}); 
$\rm{(e)}$ the definition of $S_{f,t}$ in~\eqref{eq:scoring};  
$\rm{(f)}$ the two facts that: (1) for an active infected family $f \in \bad \cap \bad_t$, $\cM_{t,f}=1$ implies $\sy_t = 1$,  
and (2) if $f$ is an infected but inactive family, it behaves like a healthy family and there are $k_f-1$ potentially active infected families left in the system; hence, we can follow similar computations as in Section~\ref{sec:Prop1ii}, where $k_f$ is now replaced by $k_f-1$;
$\rm{(g)}$ the fact that each family is selected with probability $1-\frac{\binom{F-1}{\rho}}{\rho} = \frac{\rho}{F}$; 
and $\rm{(h)}$ Proposition~\ref{prop:prereq}(i).

\subsection{Proof of Proposition~\ref{prop:prereq}(iv)}
Finally, we prove the lower bound on the difference {of the} expected scores $\mm-\mf$, between an infected and a healthy family. In particular, we have that
\begin{align}\label{eq:prob1-4:0}
    &\frac{\mm-\mf}{\TI } \stackrel{\rm{(a)}}{=}  \al - (\hh_{k_f} - (1-\al) \hh_{k_f-1}) \notag \\
    &\stackrel{\rm{(b)}}{=} \al - \! \sum_{\ell=0}^{k_f} \Bigg[\left(\!\!\binom{k_f}{\ell}\!-\!\binom{k_f\!-\!1}{\ell}\!\!\right) \nonumber \\
    & \qquad \qquad \quad \times \al^{\ell} (1\!-\!\al)^{k_f-\ell} \left(\!1\!-\!\frac{ \binom{F-\ell-1}{\rho-1}}{\binom{F}{\rho}} \!\right)\Bigg] \notag\\
    &= \al-\! \sum_{\ell=0}^{k_f} \!\binom{k_f}{\ell} \al^{\ell} (1\!-\!\al)^{k_f-\ell} \frac{\ell}{k_f}\left(\!1\!-\!\frac{ \binom{F-\ell-1}{\rho-1}}{\binom{F}{\rho}} \!\right) \notag\\
    &= \al- \sum_{\ell=0}^{k_f} \Bigg[\binom{k_f}{\ell} \al^{\ell} (1-\al)^{k_f-\ell} \nonumber \\
    & \qquad \qquad \qquad \times \frac{\ell}{k_f}\Bigg(1-\frac{\rho}{F -\ell}\prod_{j=1}^{\ell} \left( 1\!-\!\frac{\rho}{F-j+1}\right)\Bigg) \Bigg]\notag\\
    &\stackrel{\rm{(c)}}{\geq} \al \!-\!  \al \sum_{\ell=0}^{k_f}\! \binom{k_f}{\ell} \al^{\ell-1} (1-\al)^{k_f-\ell} 
    \frac{\ell}{k_f}\left(\! 1
\!-\frac{\rho}{F}\left(\! 1 \!-\frac{2 \rho}{F}\right)^{\ell} \right), 
    \end{align}
where in $\rm{(a)}$ we substituted $\mf = {\TI} \left( \hh_{k_f} - (1-\frac{\rho}{F}) \right )$ and ${\mm = {\TI} \left(\al + (1-\al)\hh_{k_f-1} - (1-\frac{\rho}{F}) \right )}$ (as obtained in Proposition~\ref{prop:prereq}(ii) and Proposition~\ref{prop:prereq}(iii), respectively); the equality in~$\rm{(b)}$ holds by substituting the expression of $\hh_x$ in~\eqref{eq:hx}; 
and step~$\rm{(c)}$ follows from the assumption $F\geq 2 k_f$.
    
Now, using the identity $\binom{a}{b} = \frac{a}{b}\binom{a-1}{b-1}$, we can continue from~\eqref{eq:prob1-4:0} as follows: 
\begin{align}
&\frac{\mm-\mf}{\TI } \notag\\
&\geq \hspace{-1pt} \al\hspace{-1pt}-\hspace{-1pt} \al \sum_{\ell=1}^{k_f} \!\binom{k_f-1}{\ell-1} \al^{\ell-1} (1\hspace{-1pt}-\hspace{-1pt}\al)^{k_f-\ell} \left(1-\frac{\rho}{F}\left(\hspace{-1pt}1\hspace{-1pt}-\hspace{-1pt}\frac{2 \rho}{F}\right)^{\ell} \right) \notag \\
    &= \al\!-\! \al \hspace{-1pt}\sum_{\ell=0}^{k_f-1}\hspace{-2pt} \binom{k_f\hspace{-1pt}-\hspace{-1pt}1}{\ell} \al^{\ell} (1\!-\!\al)^{k_f-1-\ell} \left(\!1\!-\!\frac{\rho}{F}\left(1\!-\!\frac{2 \rho}{F}\right)^{\hspace{-1pt}\ell\hspace{-1pt}+\hspace{-1pt}1} \hspace{-1pt}\right)\notag \\
    &\stackrel{\rm{(d)}}{\geq} \hspace{-2pt}\al\!-\! \al \!\sum_{\ell=0}^{k_f-1}\hspace{-2pt} \binom{k_f\hspace{-1pt}-\hspace{-1pt}1}{\ell} \al^{\ell} (1\!-\!\al)^{k_f\hspace{-1pt}-\hspace{-1pt}1\hspace{-1pt}-\hspace{-1pt}\ell} \left(\!1\!-\!\frac{\rho}{2F}\left(1-\frac{2 \rho}{F}\right)^{\hspace{-1pt}\ell} \hspace{0pt}\right) \notag\\
    &\stackrel{\rm{(e)}}{=} \al- \al \left(1-\frac{\rho}{2F}\left(1-\frac{2 \al \rho}{F}\right)^{k_f-1} \right) \notag\\
    &\geq \frac{\al \rho}{2F}\left(1-\frac{2 \al \rho}{F}\right)^{k_f} \notag \\
    &\stackrel{\rm{(f)}}{\geq} \frac{\al \rho}{2F}\exp\left(-\frac{\frac{2 \al k_f \rho}{F}}{1-\frac{2 \al \rho}{F}}\right) \notag\\
    &\stackrel{\rm{(g)}}{\geq} \frac{\al \rho}{2F}\exp\left(-\frac{\al}{1-\frac{\al }{k_f}}\right)\notag \\
    &\stackrel{\rm{(h)}}{\geq}
    \frac{\al \rho}{2F}  {\rm{e}}^{-2},
\end{align}
where the labeled (in)equalities follow from: 
{$\rm{(d)}$ the assumption in Proposition~\ref{prop:prereq} that $\rho  \leq \left\lfloor\frac{F}{2 k_f}\right\rfloor \leq \frac{F}{2 k_f}$ and the fact that $k_f \geq 2$;
$\rm{(e)}$ using the binomial theorem;}
$\rm{(f)}$ the inequality $ {{\rm{e}}^{-\frac{x}{1-x}} \leq 1-x \leq {\rm{e}}^{-x}}$, which is valid for any $x \in [0,1]$; 
$\rm{(g)}$ the assumption in Proposition~\ref{prop:prereq} that $\rho  \leq \left\lfloor\frac{F}{2 k_f}\right\rfloor \leq \frac{F}{2 k_f}$; 
and $\rm{(h)}$ the fact that $\al \leq 1$ and the assumption that $k_f \geq 2$.

\section{Proof of Proposition~\ref{prop:opti}} \label{proof:opti}
We let
\begin{equation}
g(\rho) = \rho \left(1-\prb^{\frac{\spar}{\rho}}\right),
\end{equation}
and we show that it is a non-decreasing function of $\rho$.
We have that
\begin{align}
 &\frac{{\rm{d}}{g(\rho)}}{{\rm{d}}\rho} = -\frac{\spar}{\rho} \prb^{\frac{\spar}{\rho}} \log\left(\frac{1}{\prb}\right)  + \left(1-\prb^{\frac{\spar}{\rho}}\right) \notag \\
 &= -\frac{\spar}{\rho} \prb^{\frac{\spar}{\rho}} \log\left(1+\frac{1-\prb}{\prb}\right)  + \left(1-\prb^{\frac{\spar}{\rho}}\right) \notag \\
 & \stackrel{\rm{(a)}}{\geq} -\frac{\spar}{\rho} \prb^{\frac{\spar}{\rho}}  \frac{1-\prb}{\prb} +  \left(1-\prb^{\frac{\spar}{\rho}}\right)  \notag \\
 &=  \frac{\prb^{\frac{\spar}{\rho}}}{\rho} \left(\rho \prb^{-\frac{\spar}{\rho}}-\frac{\spar (1-\prb)}{\prb} -\rho \right) \notag  \\
  &=  \frac{\prb^{\frac{\spar}{\rho}}}{\rho} \left(\rho (1-(1-\prb))^{-\frac{\spar}{\rho}}-\frac{\spar (1-\prb)}{\prb} \!-\!\rho \right) \notag \\
 &\stackrel{\rm{(b)}}{=}  \frac{\prb^{\frac{\spar}{\rho}}}{\rho} \left(\rho \left(1+\sum_{j=1}^{\infty}\left(1\!-\!\prb\right)^j\right)^{\frac{\spar}{\rho}} \!\!\!\!-\frac{\spar (1-\prb)}{\prb} \!-\!\rho \right) \notag \\
 &\stackrel{\rm{(c)}}{\geq}  \frac{\prb^{\frac{\spar}{\rho}}}{\rho} \left(\rho \left(\! 1\!+\!\frac{\spar}{\rho}\sum_{j=1}^{\infty}\left(1\!-\!\prb\right)^j\right)\!-\!\frac{\spar (1-\prb)}{\prb} \!-\!\rho\right) \notag \\
 & =  \frac{\prb^{\frac{\spar}{\rho}}}{\rho} \left(\spar \sum_{j=1}^{\infty} \left(1-\prb\right)^j-\frac{\spar(1-\prb)}{\prb} \right) = 0, 
\end{align}
where the labeled (in)equalities follow from:
$\rm{(a)}$ using the inequality $\log(1+x) \leq x$ for $x > -1$;
$\rm{(b)}$ the fact that $(1-x)^{-1} = \sum_{j=0}^\infty x^j$ for $x \in (0,1)$; 
and $\rm{(c)}$ using the Bernoulli's inequality. 
Since $g(\rho)$ is non-decreasing in $\rho$, then the maximum occurs at $\rho = U$.
This concludes the proof of Proposition~\ref{prop:opti}.

\newpage 


\begin{thebibliography}{10}
\providecommand{\url}[1]{#1}
\csname url@samestyle\endcsname
\providecommand{\newblock}{\relax}
\providecommand{\bibinfo}[2]{#2}
\providecommand{\BIBentrySTDinterwordspacing}{\spaceskip=0pt\relax}
\providecommand{\BIBentryALTinterwordstretchfactor}{4}
\providecommand{\BIBentryALTinterwordspacing}{\spaceskip=\fontdimen2\font plus
\BIBentryALTinterwordstretchfactor\fontdimen3\font minus \fontdimen4\font\relax}
\providecommand{\BIBforeignlanguage}[2]{{%
\expandafter\ifx\csname l@#1\endcsname\relax
\typeout{** WARNING: IEEEtran.bst: No hyphenation pattern has been}%
\typeout{** loaded for the language `#1'. Using the pattern for}%
\typeout{** the default language instead.}%
\else
\language=\csname l@#1\endcsname
\fi
#2}}
\providecommand{\BIBdecl}{\relax}
\BIBdecl

\bibitem{dorfman1943}
R.~Dorfman, ``The {D}etection of {D}efective {M}embers of {L}arge {P}opulations,'' \emph{The Annals of Mathematical Statistics}, vol.~14, no.~4, pp. 436--440, 12 1943.

\bibitem{verdun2021}
\BIBentryALTinterwordspacing
C.~M. Verdun, T.~Fuchs, P.~Harar, D.~Elbrächter, D.~S. Fischer, J.~Berner, P.~Grohs, F.~J. Theis, and F.~Krahmer, ``Group {T}esting for {SARS}-{C}o{V}-2 {A}llows for {U}p to 10-{F}old {E}fficiency {I}ncrease {A}cross {R}ealistic {S}cenarios and {T}esting {S}trategies,'' \emph{Frontiers in Public Health}, vol.~9, 2021. [Online]. Available: \url{https://www.frontiersin.org/articles/10.3389/fpubh.2021.583377}
\BIBentrySTDinterwordspacing

\bibitem{berger1984}
T.~Berger, N.~Mehravari, D.~Towsley, and J.~Wolf, ``Random {M}ultiple-{A}ccess {C}ommunication and {G}roup {T}esting,'' \emph{IEEE Transactions on Communications}, vol.~32, no.~7, pp. 769--779, 1984.

\bibitem{Du1993}
\BIBentryALTinterwordspacing
D.-Z. Du and F.~K. Hwang, \emph{Combinatorial Group Testing and Its Applications}, 2nd~ed.\hskip 1em plus 0.5em minus 0.4em\relax WORLD SCIENTIFIC, 1999. [Online]. Available: \url{https://www.worldscientific.com/doi/abs/10.1142/4252}
\BIBentrySTDinterwordspacing

\bibitem{Chan2011}
C.~Chan, P.~H. Che, S.~Jaggi, and V.~Saligrama, ``Non-adaptive probabilistic group testing with noisy measurements: Near-optimal bounds with efficient algorithms,'' \emph{2011 49th Annual Allerton Conference on Communication, Control, and Computing, Allerton 2011}, 07 2011.

\bibitem{mazumdar2016}
A.~Mazumdar, ``Nonadaptive {G}roup {T}esting {W}ith {R}andom {S}et of {D}efectives,'' \emph{IEEE Transactions on Information Theory}, vol.~62, no.~12, pp. 7522--7531, 2016.

\bibitem{barg2017}
A.~Barg and A.~Mazumdar, ``Group testing schemes from codes and designs,'' \emph{IEEE Transactions on Information Theory}, vol.~63, no.~11, pp. 7131--7141, 2017.

\bibitem{inan2018}
H.~A. Inan, P.~Kairouz, M.~Wootters, and A.~Ozgur, ``On the {O}ptimality of the {K}autz-{S}ingleton {C}onstruction in {P}robabilistic {G}roup {T}esting,'' in \emph{2018 56th Annual Allerton Conference on Communication, Control, and Computing (Allerton)}, 2018, pp. 188--195.

\bibitem{arpino2021}
G.~Arpino, N.~Grometto, and A.~S. Bandeira, ``Group {T}esting in the {H}igh {D}ilution {R}egime,'' in \emph{2021 IEEE International Symposium on Information Theory (ISIT)}, 2021, pp. 1955--1960.

\bibitem{Atia09}
G.~Atia and V.~Saligrama, ``Noisy group testing: An information theoretic perspective,'' in \emph{2009 47th Annual Allerton Conference on Communication, Control, and Computing (Allerton)}, 2009, pp. 355--362.

\bibitem{cheng2023}
X.~Cheng, S.~Jaggi, and Q.~Zhou, ``Generalized {G}roup {T}esting,'' \emph{IEEE Transactions on Information Theory}, vol.~69, no.~3, pp. 1413--1451, 2023.

\bibitem{scarlett2018}
J.~Scarlett, ``Noisy {A}daptive {G}roup {T}esting: {B}ounds and {A}lgorithms,'' \emph{IEEE Transactions on Information Theory}, vol.~65, 03 2018.

\bibitem{CheraghchiIT2011}
M.~Cheraghchi, A.~Hormati, A.~Karbasi, and M.~Vetterli, ``Group {T}esting {W}ith {P}robabilistic {T}ests: {T}heory, {D}esign and {A}pplication,'' \emph{IEEE Transactions on Information Theory}, vol.~57, no.~10, pp. 7057--7067, 2011.

\bibitem{aldridge2019}
\BIBentryALTinterwordspacing
M.~Aldridge, O.~Johnson, and J.~Scarlett, ``Group {T}esting: {A}n {I}nformation {T}heory {P}erspective,'' \emph{Foundations and Trends® in Communications and Information Theory}, vol.~15, no. 3-4, pp. 196--392, 2019. [Online]. Available: \url{http://dx.doi.org/10.1561/0100000099}
\BIBentrySTDinterwordspacing

\bibitem{cao2023}
S.-J. Cao, R.~Goenka, C.-W. Wong, A.~Rajwade, and D.~Baron, ``Group {T}esting with {S}ide {I}nformation via {G}eneralized {A}pproximate {M}essage {P}assing,'' \emph{IEEE Transactions on Signal Processing}, vol.~71, pp. 2366--2375, 2023.

\bibitem{pmlr2021}
\BIBentryALTinterwordspacing
P.~Nikolopoulos, S.~Rajan~Srinivasavaradhan, T.~Guo, C.~Fragouli, and S.~Diggavi, ``Group testing for connected communities,'' in \emph{Proceedings of The 24th International Conference on Artificial Intelligence and Statistics}, ser. Proceedings of Machine Learning Research, vol. 130.\hskip 1em plus 0.5em minus 0.4em\relax PMLR, 13--15 Apr 2021, pp. 2341--2349. [Online]. Available: \url{https://proceedings.mlr.press/v130/nikolopoulos21a.html}
\BIBentrySTDinterwordspacing

\bibitem{karimi2022}
E.~Karimi, A.~Heidarzadeh, K.~R. Narayanan, and A.~Sprintson, ``Noisy {G}roup {T}esting with {S}ide {I}nformation,'' in \emph{2022 56th Asilomar Conference on Signals, Systems, and Computers}, 2022, pp. 867--871.

\bibitem{pavlos21}
P.~Nikolopoulos, S.~R. Srinivasavaradhan, T.~Guo, C.~Fragouli, and S.~Diggavi, ``Group testing for overlapping communities,'' in \emph{ICC 2021 - IEEE International Conference on Communications}, 2021, pp. 1--7.

\bibitem{Clark2023-wr}
R.~G. Clark, B.~Barnes, and M.~Parsa, ``Clustered and {U}nclustered {G}roup {T}esting for {B}iosecurity,'' \emph{Journal of Agricultural, Biological and Environmental Statistics}, Aug. 2023.

\bibitem{Christoff2021-qw}
A.~P. Christoff, G.~N.~F. Cruz, A.~F.~R. Sereia, D.~R. Boberg, D.~C. de~Bastiani, L.~E. Yamanaka, G.~Fongaro, P.~H. Stoco, M.~L. Bazzo, E.~C. Grisard, C.~Hernandes, and L.~F.~V. de~Oliveira, ``\BIBforeignlanguage{en}{Swab pooling: A new method for large-scale {RT-qPCR} screening of {SARS-CoV-2} avoiding sample dilution},'' \emph{\BIBforeignlanguage{en}{PLoS One}}, vol.~16, no.~2, p. e0246544, Feb. 2021.

\bibitem{hiv}
\BIBentryALTinterwordspacing
L.~M. Wein and S.~A. Zenios, ``Pooled {T}esting for {HIV} {S}creening: {C}apturing the {D}ilution {E}ffect,'' \emph{Operations Research}, vol.~44, no.~4, pp. 543--569, 1996. [Online]. Available: \url{http://www.jstor.org/stable/171999}
\BIBentrySTDinterwordspacing

\bibitem{gebhard2022}
O.~Gebhard, M.~Hahn-Klimroth, O.~Parczyk, M.~Penschuck, M.~Rolvien, J.~Scarlett, and N.~Tan, ``Near-{O}ptimal {S}parsity-{C}onstrained {G}roup {T}esting: {I}mproved {B}ounds and {A}lgorithms,'' \emph{IEEE Transactions on Information Theory}, vol.~68, no.~5, pp. 3253--3280, 2022.

\bibitem{gandikota2019}
V.~Gandikota, E.~Grigorescu, S.~Jaggi, and S.~Zhou, ``Nearly {O}ptimal {S}parse {G}roup {T}esting,'' \emph{IEEE Transactions on Information Theory}, vol.~65, no.~5, pp. 2760--2773, 2019.

\bibitem{mazumdar2014group}
A.~Mazumdar and S.~Mohajer, ``Group testing with unreliable elements,'' in \emph{2014 52nd Annual Allerton Conference on Communication, Control, and Computing (Allerton)}.\hskip 1em plus 0.5em minus 0.4em\relax IEEE, 2014, pp. 1--3.

\bibitem{Atia2009}
G.~K. Atia and V.~Saligrama, ``Boolean compressed sensing and noisy group testing,'' \emph{IEEE Transactions on Information Theory}, vol.~58, pp. 1880--1901, 2009.

\bibitem{jain2022}
S.~Jain, M.~Cardone, and S.~Mohajer, ``Identifying {R}eliable {M}achines for {D}istributed {M}atrix-{V}ector {M}ultiplication,'' in \emph{2022 IEEE International Symposium on Information Theory (ISIT)}, 2022, pp. 820--825.

\bibitem{Fernandez2020}
J.~Fern{\'a}ndez-Salinas, D.~Arag{\'o}n-Caqueo, G.~Vald{\'e}s, and D.~Laroze, ``\BIBforeignlanguage{en}{Modelling pool testing for {SARS-CoV-2}: addressing heterogeneity in populations},'' \emph{\BIBforeignlanguage{en}{Epidemiol Infect}}, vol. 149, p.~e9, 2020.

\bibitem{price23}
\BIBentryALTinterwordspacing
E.~Price, J.~Scarlett, and N.~Tan, ``{Fast {S}plitting {A}lgorithms for {S}parsity-{C}onstrained and {N}oisy {G}roup {T}esting},'' \emph{Information and Inference: A Journal of the IMA}, vol.~12, no.~2, pp. 1141--1171, 01 2023. [Online]. Available: \url{https://doi.org/10.1093/imaiai/iaac031}
\BIBentrySTDinterwordspacing

\bibitem{atia2012boolean}
G.~K. Atia and V.~Saligrama, ``Boolean compressed sensing and noisy group testing,'' \emph{IEEE Transactions on Information Theory}, vol.~58, no.~3, pp. 1880--1901, 2012.

\end{thebibliography}
\end{document}